\documentclass[runningheads]{llncs}

\usepackage[utf8]{inputenc} 
\usepackage[T1]{fontenc}    
\usepackage{hyperref}       
\usepackage{url}            
\usepackage{booktabs}       
\usepackage{amsfonts}       
\usepackage{nicefrac}       
\usepackage{microtype}      
\usepackage{xcolor}         
\usepackage{graphicx}
\usepackage{algorithm}
\usepackage{algorithmic}
\usepackage{amsmath}
\usepackage{amssymb}
\let\oldendproof\endproof
\renewcommand{\endproof}{\qed\oldendproof}

\usepackage{tikz}
\usetikzlibrary{positioning}
\usetikzlibrary{calc}
\usetikzlibrary{shapes.geometric}
\usetikzlibrary{positioning, shapes.geometric, backgrounds}
\usepackage{subcaption}

\spnewtheorem{observation}{Observation}{\bfseries}{\itshape}
\makeatletter
\def\@spopargbegintheorem#1#2#3#4#5{\trivlist
  \item[\hskip\labelsep{#4#1\ #2}]{\normalfont(#3)\@thmcounterend\ }#5}
\makeatother

\newcommand{\fundingnote}[1]{%
  \begingroup
  \renewcommand{\thefootnote}{}%
  \footnotetext{#1}%
  \addtocounter{footnote}{-1}%
  \endgroup
}

\title{Strictly Unfriendly $k$-Partitions: Sharp Degree Thresholds and ETH-Based Lower Bounds}

\titlerunning{Strictly Unfriendly $k$-Partitions}

\author{Sanjay Jain\orcidID{0000-0001-6798-8330}
\and
Frank Stephan\orcidID{0000-0001-9152-1706}
\and
Haoyun Tang\orcidID{0009-0001-5410-0221}}

\institute{School of Computing, National University of Singapore, Singapore\\
\email{sanjayjain@nus.edu.sg, fstephan@nus.edu.sg, e1154532@u.nus.edu}}

\begin{document}

\maketitle

\fundingnote{Sanjay Jain and Frank Stephan were supported by Singapore Ministry of Education (MOE) AcRF Tier 2 grant MOE-000538-01. Additionally, Sanjay Jain was supported by NUS grant E-252-00-0021-09.}

\begin{abstract}
We present a complete complexity classification and fine-grained analysis for the Strictly Unfriendly $k$-Partition problem ($\text{SU}k\text{P}$), which asks whether the vertices of a graph can be partitioned into $k$ classes such that every vertex has strictly more neighbors in each of the other $k-1$ classes than in its own. We first establish a sharp tractability-intractability threshold with respect to the maximum degree $\Delta$: for $k \in \{2, 3\}$, $\text{SU}k\text{P}$ is solvable in polynomial time when $\Delta \le 2$, but becomes $\mathbf{NP}$-hard and ETH-hard immediately on subcubic graphs ($\Delta = 3$), resolving the degree limitations in prior work and establishing subcubic graphs as the precise frontier of intractability.

Furthermore, under the Exponential Time Hypothesis (ETH), we establish the first fine-grained lower bounds via direct reductions from $(3,3)$-SAT. On general graphs, we establish a uniform lower bound across all partition parameters $k \ge 2$, revealing a striking \emph{complexity convergence} where the core exponential complexity remains invariant despite technical divergences in gadget constructions. On subcubic graphs, we formally quantify the "cost of sparsity," deriving explicit lower bound constants to demonstrate how enforced structural degree restrictions degrade reduction efficiency.

\keywords{Unfriendly Partition \and Graph Partition  \and Fine-Grained Complexity \and Exponential Time Hypothesis.}
\end{abstract}


\section{Introduction}

The study of graph partitions with local constraints is a classic theme in structural graph theory and theoretical computer science. A foundational problem in this domain is the \emph{Unfriendly Partition} problem, introduced by Cowen and Emerson~\cite{greenwade95}, which asks whether the vertices of a graph $G=(V,E)$ can be partitioned into $k$ sets such that every vertex has at least as many neighbours in other sets as it has in its own. While an unfriendly 2-partition always exists for every finite graph via a simple potential function argument, the problem has inspired extensive research on infinite graph classes. Notably, Shelah and Milner~\cite{Shelah_Milner_1990} disproved the existence of unfriendly 2-partitions for general infinite graphs, while simultaneously proving that an unfriendly 3-partition always exists. Subsequent studies further expanded existence results to countable infinite degree graphs~\cite{AHARONI19901}, rayless graphs~\cite{bruhn2010every}, and graphs avoiding certain infinite configurations~\cite{berger2017unfriendly}.

For finite graphs, where existence is guaranteed, research has shifted toward algorithmic complexity and constrained variants. Sarnath and He~\cite{SARNATH1990343} presented a polynomial-time algorithm for finding the lexicographically first unfriendly partition, whereas Sarnath~\cite{sarnath1992dtml} established LOGSPACE-hardness under $\text{NC}^1$ reductions. To explore the boundaries of intractability, several NP-complete variants of graph partitioning under local or membership constraints were introduced (e.g., \cite{feder1999complexity}).

In this paper, we focus on the \textbf{Strictly Unfriendly Partition (SUP)} problem, introduced by Haynes et al.~\cite{HAYNES2015155}. A $k$-partition is strictly unfriendly if every vertex has \emph{strictly more} neighbours in \emph{each} of the other $k-1$ partition classes than in its own. The transition from "at least as many" to "strictly more" fundamentally alters the problem's behavior: an SU$k$P is no longer guaranteed to exist for every finite graph, giving rise to the natural decision problem $\text{SU}k\text{P}$. Haynes et al.~\cite{HAYNES2015155} provided initial structural observations on specific graph families (e.g., $k$-cliques admit an SU$k$P, whereas $(k+1)$-cliques do not).

From a game-theoretic perspective, $\text{SU}k\text{P}$ corresponds to stable colourings in anti-coordination games, as studied by Kun et al.~\cite{kun2013anti}. While Kun et al.~\cite{kun2013anti} established the NP-hardness of $\text{SU}k\text{P}$ for general $k$, their constructions leave high-degree instances and leave two key aspects unexplored:
\begin{enumerate}
    \item \emph{Tight Degree Constraints for $k \in \{2, 3\}$:} For $k=2$, Kun et al.'s reduction produces vertices whose degrees scale with variable occurrences ($\Delta \ge 6$ or unbounded); for $k \ge 3$, their reduction incorporates clique gadgets ($K_{k-2}$) per edge, causing the maximum degree $\Delta$ to scale linearly with $k$ (e.g., $\Delta \ge 8$ for $k=3$). Consequently, the exact degree-based tractability threshold ($\Delta = 3$) remained open for $k \in \{2, 3\}$.
    \item \emph{Fine-Grained Quantitative Bounds:} Existing proofs focused solely on NP-hardness, leaving ETH-based quantitative lower bounds and the reduction efficiency loss under degree restrictions entirely uncharted.
\end{enumerate}

\medskip
\noindent \textbf{Our Contributions.}
We address these gaps by providing a complete complexity classification and fine-grained lower bound analysis for $\text{SU}k\text{P}$. Assuming ETH, let $c_{A}$ denote the lower bound constant of a problem $A$ (as defined in Definition~\ref{def_lbc}), where $c_{(3,3)\text{-SAT}}$ corresponds to $(3,3)$-SAT. Our main results are as follows:

\begin{itemize}
    \item \textbf{A Sharp Degree-Based Dichotomy ($\Delta \le 2$ vs. $\Delta \geq 3$):} 
    We completely pin down the tractability threshold of $\text{SU}k\text{P}$ with respect to the maximum degree $\Delta$ for $k \in \{2, 3\}$. We observe that $\Delta \ge k-1$ is a necessary condition for a graph to admit an $\text{SU}k\text{P}$, rendering the problem trivial when this bound is violated. Crucially, for $k \in \{2, 3\}$, we prove a sharp complexity phase transition at $\Delta = 3$: while $\text{SU}k\text{P}$ is solvable in polynomial time ($\mathbf{P}$) when $\Delta \le 2$, it becomes $\mathbf{NP}$-hard immediately on subcubic graphs ($\Delta = 3$). This resolves the unconstrained degree bottlenecks present in prior work, establishing subcubic graphs as the precise frontier of intractability for these pivotal cases.

    \item \textbf{Uniform Lower Bounds and Complexity Convergence:} 
    By constructing direct, efficiency-preserving reductions from $(3,3)$-SAT, we establish a \textbf{uniform lower bound} of $c_{\text{SU}k\text{P}} \ge c_{(3,3)\text{-SAT}}^{2/7}$ for all $k \ge 2$ on general graphs. This demonstrates that the core exponential complexity of the strictly unfriendly condition remains invariant with respect to the partition parameter $k$ when degree constraints are unconstrained.

    \item \textbf{Subcubic Lower Bounds and the "Cost of Sparsity":} 
    For degree-restricted instances, we present tailored gadget constructions on subcubic graphs ($\Delta=3$) for the pivotal cases $k=2$ and $k=3$, establishing ETH-based lower bound constants of $c_{\text{SU2P}} \ge c_{(3,3)\text{-SAT}}^{1/13}$ and $c_{\text{SU3P}} \ge c_{(3,3)\text{-SAT}}^{1/38}$. These results formally quantify the trade-off between graph sparsity and reduction efficiency.
\end{itemize}

The remainder of this paper is organized as follows. Section~\ref{sec_pre} formalizes the notation and complexity-theoretic foundations. Sections~\ref{sec_su2p} and~\ref{sec_su3p} present the core subcubic reductions for SU2P and SU3P. Section~\ref{sec_general_hardness} generalizes these findings to any fixed $k \geq 2$ and establishes the uniform lower bound on general graphs. Section~\ref{sec_con} concludes the paper. Detailed proofs are deferred to Appendix (Section~\ref{sec_app}).

\section{Preliminaries}
\label{sec_pre}
In this section, we formalize notation, define the Strictly Unfriendly Partition problem, and introduce the complexity-theoretic foundations under the Exponential Time Hypothesis (ETH).

\subsection{Problem Definition}
Let $G=(V, E)$ be a simple, undirected finite graph. For any $v \in V$, let $N(v) = \{u \mid \{u, v\} \in E\}$ denote its neighbourhood and $d(v) = |N(v)|$ its degree. Let $L = \{0, 1, \dots, k-1\}$ be the set of labels. A $k$-partition of $V$ is represented by a labeling function $\pi: V \to L$. For any label $i \in L$, let $N_i(v) = \{u \in N(v) \mid \pi(u) = i\}$ and $d_i(v) = |N_i(v)|$.

\begin{definition}[Strictly Unfriendly $k$-Partition, SU$k$P]
    A $k$-partition $\pi$ is \emph{strictly unfriendly} if for every vertex $v \in V$ and label $i \in L \setminus \{\pi(v)\}$, it holds that $d_{\pi(v)}(v) < d_i(v)$.
\end{definition}
A vertex $v$ is \emph{legal} under $\pi$ if the above condition holds, and $G$ is \emph{legally $k$-partitionable} if it admits an SU$k$P.

\noindent \textbf{Remark.}
It is immediate from the definition that if $G$ admits an $\text{SU}k\text{P}$, every vertex must have at least one neighbour in each of the other $k-1$ partition classes. Thus, $\Delta(G) \ge k-1$ is an immediate necessary condition, rendering instances with $\Delta < k-1$ trivially NO-instances.

\subsection{Complexity Foundations}

\noindent \textbf{Boolean Satisfiability (SAT).}
We denote by $(s, k)$-SAT the class of CNF formulas where each clause contains at most $k$ literals and each variable appears in at most $s$ clauses. To streamline our reductions, all $(3,3)$-SAT instances are preprocessed, ensuring that every clause contains at least 2 literals and every variable appears 2 or 3 times in total, with both positive and negative occurrences. This establishes a tight linear relation between clauses $m$ and variables $n$:

\begin{lemma} \label{lem_nm}
    In a preprocessed $(3, 3)$-SAT instance with $n$ variables and $m$ clauses, $\frac{2}{3}n \leq m \leq \frac{3}{2}n$.
\end{lemma}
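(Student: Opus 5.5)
We prove the two inequalities by double counting the literal occurrences. Let $T$ denote the total number of literal occurrences in the preprocessed formula, i.e.\ the number of pairs (clause, literal occurrence in that clause). We will count $T$ once clause by clause and once variable by variable.

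Counting by clauses, the preprocessing guarantees that every clause contains at least $2$ and, since the instance is $(3,3)$-SAT, at most $3$ literals. Hence $2m \le T \le 3m$.

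Counting by variables, every variable appears either $2$ or $3$ times in total. Hence $2n \le T \le 3n$.

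Combining the two counts gives $2m \le T \le 3n$, so $m \le \tfrac{3}{2}n$, and $2n \le T \le 3m$, so $m \ge \tfrac{2}{3}n$. This is exactly $\frac{2}{3}n \le m \le \frac{3}{2}n$.

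The only point needing care is that $T$ is the same quantity in both counts. This requires each occurrence of a variable to be counted exactly once in both tallies. So we will use the convention, standard after preprocessing, that a clause contains no repeated literal and does not contain both $x$ and $\neg x$: duplicate literals are merged and tautological clauses are deleted. Then "a variable appears in at most $3$ clauses" coincides with "the variable has at most $3$ literal occurrences." Once this is fixed, the argument is immediate, and I do not expect a real obstacle.
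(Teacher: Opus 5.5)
Your proof is correct and matches the paper's argument: it double counts the total number of literal occurrences, with $2n \le \ell \le 3n$ from the variable side and $2m \le \ell \le 3m$ from the clause side. Your remark about merging duplicate literals and deleting tautological clauses is extra care that the paper leaves implicit in its preprocessing.
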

\begin{proof}
    Let $\ell$ denote total literal occurrences. Preprocessing guarantees $2n \le \ell \le 3n$ and $2m \le \ell \le 3m$, yielding $\frac{2}{3}n \le m \le \frac{3}{2}n$.
\end{proof}

\medskip
\noindent \textbf{Exponential Time Hypothesis (ETH).}
\begin{definition}[Lower Bound Constant] \label{def_lbc}
    For a decision problem $\text{A}$ with a primary complexity parameter $n$ and an auxiliary parameter $m$, we define its \emph{lower bound constant} $c_{\text{A}}$ as:
    $$c_{\text{A}} = \inf \{d \mid \text{A is solvable in } O(d^n + \text{poly}(n,m)) \text{ time}\}.$$
\end{definition}

Here, $n$ denotes the main parameter governing the exponential time bound (e.g., the number of variables in a SAT formula or vertices in a graph), whereas $m$ represents an auxiliary parameter (e.g., clauses or edges). The Exponential Time Hypothesis (ETH)~\cite{impagliazzo2001complexity} conjectures $c_{3\text{-SAT}} > 1$. Since $m = O(n)$ in $(3,3)$-SAT (Lemma~\ref{lem_nm}), $c_{(3,3)\text{-SAT}}$ is parameterized solely by $n$. All fine-grained lower bounds in this paper are expressed directly in terms of $c_{(3,3)\text{-SAT}}$. While ETH is originally stated for $3$-SAT, $(3,3)$-SAT remains ETH-hard with a linearly preserved lower bound constant (Detailed proofs are deferred to Appendix, which is Section~\ref{sec_app_prf2}):

\begin{lemma}[Folklore] \label{lem_33}
    $(3, 3)$-SAT is ETH-hard; specifically, $c_{3\text{-SAT}}^{1/(d+1)} \leq c_{(3, 3)\text{-SAT}}$ for some constant $d \in \mathbb{N}^+$.
\end{lemma}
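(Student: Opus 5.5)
The plan is a chain of two reductions, each linear in the number of variables, followed by the standard bookkeeping of exponents. Starting from a $3$-SAT instance $\varphi$ with $n$ variables, I would first invoke the Sparsification Lemma of Impagliazzo, Paturi and Zane. For every $\varepsilon>0$ it expresses $\varphi$ as a disjunction of at most $2^{\varepsilon n}$ $3$-CNF formulas $\varphi_1,\dots,\varphi_t$, each on at most $n$ variables with at most $C_\varepsilon n$ clauses. Consequently every variable of each $\varphi_i$ occurs in $O(n)$ clauses in total, and the total number of literal occurrences is at most $3C_\varepsilon n$. The sparsifier runs in time $2^{\varepsilon n}\mathrm{poly}(n)$.

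Next I apply the classical Tovey-style occurrence-reduction to each $\varphi_i$. Suppose a variable $x$ occurs $r\ge 4$ times. Replace its occurrences by fresh variables $x_1,\dots,x_r$ and add the cyclic implication clauses $(\neg x_1\vee x_2),(\neg x_2\vee x_3),\dots,(\neg x_r\vee x_1)$. These force all the copies to be equal, and each copy now occurs exactly $3$ times. Variables occurring at most $3$ times are left alone. Then I run the preprocessing assumed in the paper:
\begin{itemize}
\item delete pure literals;
\item propagate unit clauses;
\item remove variables occurring once.
\end{itemize}
Each of these steps preserves satisfiability and only shrinks the instance. The resulting $(3,3)$-SAT formula has at most $3C_\varepsilon n \le d\,n$ variables for a constant $d$. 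I would choose $d := \lceil 3C_\varepsilon\rceil$, possibly plus one to absorb the bookkeeping, so the variable blow-up is linear.

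To finish, suppose $(3,3)$-SAT is solvable in time $O(c^{N}+\mathrm{poly})$ for some $c>c_{(3,3)\text{-SAT}}$. Then $3$-SAT is solvable in time $2^{\varepsilon n}\bigl(c^{dn}+\mathrm{poly}(n)\bigr)$. Hence $c_{3\text{-SAT}}\le 2^{\varepsilon}c^{d}$, and letting $c\downarrow c_{(3,3)\text{-SAT}}$ gives $c_{3\text{-SAT}}\le 2^{\varepsilon}c_{(3,3)\text{-SAT}}^{d}$. This already yields ETH-hardness: if $c_{3\text{-SAT}}>1$, pick $\varepsilon$ small enough that $2^\varepsilon<c_{3\text{-SAT}}$, and then $c_{(3,3)\text{-SAT}}>1$.

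The main obstacle is the $2^{\varepsilon}$ factor. The stated form $c_{3\text{-SAT}}^{1/(d+1)}\le c_{(3,3)\text{-SAT}}$ has no such factor, so it must be absorbed. The idea is to use the slack of one extra power, which is why the exponent is $d+1$ rather than $d$. Fix $\varepsilon$ with $2^\varepsilon\le c_{(3,3)\text{-SAT}}$; this is possible because $c_{(3,3)\text{-SAT}}>1$ under ETH by the previous step. Then $c_{3\text{-SAT}}\le c_{(3,3)\text{-SAT}}^{d+1}$, where $d$ depends on this fixed $\varepsilon$. If ETH fails, then $c_{3\text{-SAT}}=1$ and the inequality is trivial.

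The remaining care points are these:
\begin{itemize}
\item the choice of $\varepsilon$ must precede the choice of $d$, since $d$ depends on $\varepsilon$;
\item the preprocessing must never increase the variable count;
\item the polynomial-time overhead must be absorbed into the $\mathrm{poly}(n,m)$ term of Definition~\ref{def_lbc}, using $m=O(n)$ from Lemma~\ref{lem_nm}.
\end{itemize}
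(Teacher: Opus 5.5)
Your proof is correct, but it uses sparsification differently from the paper, which changes where the extra power comes from. The paper cites Impagliazzo--Paturi--Zane in the form $c_{3\text{-SAT}} = c_{(d,3)\text{-SAT}}$ for a single fixed occurrence bound $d$, with no $\varepsilon$ anywhere. It then applies the implication cycle \emph{keeping} $x_i$, so $j_i+1 \le d+1$ variables replace each original variable. The exponent $1/(d+1)$ is just this per-variable blow-up.

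You instead use the $\varepsilon$-dependent sparsification lemma as usually stated and bound the variable count by the number of literal occurrences, $3C_\varepsilon n$. This gives $c_{3\text{-SAT}} \le 2^{\varepsilon} c_{(3,3)\text{-SAT}}^{d}$, and you spend the extra power to absorb the factor $2^{\varepsilon}$.

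\textbf{What your route buys.} It is more rigorous. The standard sparsification lemma only gives $c_{3\text{-SAT}} \le 2^{\varepsilon} c_{(d_\varepsilon,3)\text{-SAT}}$ for each $\varepsilon>0$, hence $c_{3\text{-SAT}} = \lim_{d\to\infty} c_{(d,3)\text{-SAT}}$. The exact equality at one fixed $d$ asserted in Lemma~\ref{lem_paturi} is therefore stronger than what that lemma literally provides, and your argument does not depend on it.

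\textbf{What it costs.} Your $d$ depends on $\varepsilon$, and you choose $\varepsilon$ from the unknown value $c_{(3,3)\text{-SAT}}$. With that much freedom, the displayed inequality says essentially nothing beyond ETH-hardness. Once $c_{(3,3)\text{-SAT}}>1$, the trivial bound $c_{3\text{-SAT}} \le 2$ already gives $c_{3\text{-SAT}}^{1/(d+1)} \le c_{(3,3)\text{-SAT}}$ for every sufficiently large $d$. The genuinely quantitative output of your argument is the family of bounds $c_{3\text{-SAT}} \le 2^{\varepsilon} c_{(3,3)\text{-SAT}}^{d_\varepsilon}$. The paper's version, granting its cited lemma, would instead give a $d$ that is a fixed reduction constant independent of the unknown exponents.
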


\section{Hardness of Strictly 2-Partition (SU2P)}
\label{sec_su2p}

While SU2P is straightforwardly solvable in polynomial time for graphs with maximum degree two (where it requires each vertex to have its entire neighbourhood in the complement partition class), the problem exhibits a sharp transition to intractability as the degree increases. In this section, we establish the hardness of SU2P on sub-cubic graphs, thereby pinpointing the tractability threshold with respect to the maximum degree.
\begin{theorem} \label{thm_su2p}
SU2P is NP-hard on graphs with maximum degree three. Assuming ETH, it cannot be solved in $2^{o(n)}$ time; specifically, $c_{(3,3)\text{-SAT}}^{1/13} \leq c_{\text{SU2P}}$.
\end{theorem}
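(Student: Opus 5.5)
We reduce from preprocessed $(3,3)$-SAT, using Lemma~\ref{lem_nm} and Lemma~\ref{lem_33}. From a formula with $n$ variables and $m$ clauses we build a graph $G$ with $\Delta(G)\le 3$ and $|V(G)|\le 13n$. We then show that $G$ admits an SU2P if and only if the formula is satisfiable. Given this, an $O(d^{N})$ algorithm for SU2P on $N$-vertex graphs yields an $O(d^{13n}+\mathrm{poly})$ algorithm for $(3,3)$-SAT. Hence $c_{(3,3)\text{-SAT}}\le c_{\text{SU2P}}^{13}$, and ETH-hardness follows from Lemma~\ref{lem_33}.

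\textbf{Local rules.} For $k=2$ the legality condition reads as follows.
\begin{itemize}
\item A degree-1 vertex must have its neighbour in the opposite class.
\item A degree-2 vertex must have both neighbours in the opposite class.
\item A degree-3 vertex must have at least two of its three neighbours in the opposite class.
\end{itemize}
So degree-$\le 2$ vertices act as rigid ``inverters''. Paths of degree-2 vertices propagate labels with alternating parity, and a pendant vertex forces its neighbour to have one neighbour in its own class relative to the others. Degree-3 vertices are the only places where a choice arises: exactly one neighbour, or none, may share its label.

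\textbf{Gadgets.}
\begin{itemize}
\item \emph{Variable gadget.} For each variable $x$ with 2 or 3 occurrences, use a short cycle or path of degree-2/3 vertices. It has exactly two legal labelings, corresponding to $x=\text{true}$ and $x=\text{false}$, and exposes one port vertex per occurrence whose label encodes the literal value (negated occurrences are taken with the opposite parity). Each port spends one degree on its connection.
\item \emph{Clause gadget.} For each clause, use a degree-3 vertex $c$ with a fixed label relative to its ports. The violation constraint ``at most one neighbour shares the label of $c$'' must encode ``not all literals false''. Since that constraint forbids $\ge 2$ agreeing neighbours, I would insert inverters so that a false literal presents the same label as $c$. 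For 3-clauses this forbids two false literals, which is too strong. So instead I would use a small tree of degree-3 vertices implementing an OR: each internal vertex tolerates one agreeing neighbour.
\item \emph{Forcing.} A pendant-anchored structure fixes the reference label of each clause gadget. Because labels are only defined up to swapping the two classes, every gadget must be expressed relative to a shared parity, which we achieve by connecting through inverter paths.
\end{itemize}

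\textbf{Correctness and counting.} For soundness and completeness, we check legality vertex by vertex in each gadget, in both directions. Counting vertices per variable and per clause, and bounding $m\le\frac32 n$, should give the total $13n$. The main obstacle is designing a clause gadget for 3-clauses within degree 3 that admits a legal completion exactly when some literal is true. The strictness of the condition (at most one agreeing neighbour at degree-3 vertices) makes OR-like behaviour delicate. I would first attempt a two-level gadget: one degree-3 vertex merges two literals into an auxiliary vertex, whose label is free precisely when at least one of those literals is true, and this auxiliary vertex is combined with the third literal at a second vertex. I would verify all $2^3$ literal assignments by case analysis. The vertex budget in that gadget then determines the constant $13$.
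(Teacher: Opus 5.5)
\textbf{Gap: no concrete construction, and the constant $13$ is not derived.} Your outline has the same architecture as the paper: a linear reduction from preprocessed $(3,3)$-SAT, the degree-based local rules, variable/clause/constant gadgets, and the ETH transfer $c_{(3,3)\text{-SAT}}\le c_{\text{SU2P}}^{13}$. But it is a plan rather than a proof. No gadget is specified, and the clause gadget, which you correctly flag as the crux, is left as an attempt. Since the statement claims the exponent $1/13$, you must exhibit gadgets with $N\le 13n$ and verify them.

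Two parts of the plan also fail as written. First, pendants cannot fix a reference label. SU2P is invariant under swapping the two classes, and a pendant only forces inequality with its single neighbour. You need one rigid connected structure feeding every gadget. The paper uses an even cycle $u_1,\dots,u_{2k}$:
\begin{itemize}
\item the degree-2 vertices $u_{2j}$ force alternation, so all degree-3 anchors $u_{2j+1}$ share one label, fixed to $1$ by the global symmetry;
\item each anchor already has two opposite cycle neighbours, so it stays legal whatever its external neighbour is;
\item this costs $2$ vertices per constant neighbour.
\end{itemize}
Second, your variable gadget must insulate its ports. Each port has an edge into a clause gadget, so ``exactly two legal labelings'' is only meaningful if every port is already legal from its internal neighbours. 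The paper does this with $4$ vertices:
\begin{itemize}
\item $v_4$, of degree $2$ and adjacent to $v_1,v_2$, forces $\pi(v_1)=\pi(v_2)=a$;
\item $v_3$, adjacent to both, is then forced to $1-a$;
\item each of $v_1,v_2,v_3$ has two opposite neighbours inside the gadget, so its clause edge is unconstrained.
\end{itemize}

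\textbf{The clause gadget.} For $(x\lor y\lor z)$ your two-level OR idea is the right shape. The missing ingredient is constants at the internal levels. The paper's gadget is as follows:
\begin{itemize}
\item $v_5$ is adjacent to $x,y,v_4$;
\item $v_4$ is adjacent to $v_5,v_1$ and a constant-$1$ anchor $u$;
\item $v_1$ is adjacent to $v_4,z,v_2$;
\item a path $v_1v_2v_3u'$, with $u'$ another constant-$1$ anchor, forces $\pi(v_3)=0$, $\pi(v_2)=1$, $\pi(v_1)=0$.
\end{itemize}
Now $v_1$ needs $z=1$ or $\pi(v_4)=1$. If $\pi(v_4)=1$, then together with $\pi(u)=1$ this forces $\pi(v_5)=0$, which in turn needs $x=1$ or $y=1$. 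Conversely, every satisfying pattern extends to a legal labeling of $v_1,\dots,v_5$.

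\textbf{The count.} A 3-clause costs $5$ internal vertices plus $2\cdot 2$ amortized cycle vertices, i.e.\ $9$ vertices, or $3$ per literal. A 2-clause uses $v_1,v_2,v_3$ and one anchor, i.e.\ $5$ vertices. The bound $13n$ comes from counting per literal occurrence: $N\le 4n+3\ell\le 4n+9n$ with $\ell\le 3n$. It does not come from $m\le\frac32 n$, as you suggested. With $9$ vertices per clause and $m\le\frac32 n$ you would only get $N\le 4n+13.5n$.
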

The proof is established via a linear reduction from $(3,3)$-SAT. Since the number of vertices in our constructed graph is linear in the number of variables of the $(3,3)$-SAT instance, the ETH-hardness follows directly from Lemma~\ref{lem_33}.

\medskip
\noindent \textbf{Key Observations for SU2P.}
To simplify the reduction and the subsequent proof of correctness, we formalize the local constraints imposed by the strictly unfriendly 2-partition condition on low-degree vertices. Throughout this section, the label set is $L=\{0,1\}$, where if a vertex $u$ has label $a \in L$, then $b = 1-a$ denotes the complementary label. The following observations are straightforward; their formal proofs are deferred to appendix (Section~\ref{sec_app_prf2}).

\begin{observation}[Constraints on Degree-2 Vertices] \label{ob_SU2P_d2}
In any legal SU2P, a degree-2 vertex must have both neighbours to have the label opposite its own.
\end{observation}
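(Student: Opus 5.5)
The argument will simply unfold the definition of strict unfriendliness for $k=2$ at a vertex $v$ with $d(v)=2$. Let $\pi$ be a legal SU2P, write $a=\pi(v)$ and $b=1-a$. Since $L=\{0,1\}$, every neighbour of $v$ has label $a$ or $b$, so $d_a(v)+d_b(v)=d(v)=2$. Legality of $v$ says exactly that $d_a(v)<d_b(v)$.

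The next step is a case check over the three possible splits of the two neighbours. The pair $(d_a(v),d_b(v))$ is one of $(2,0)$, $(1,1)$ or $(0,2)$. The first violates $d_a(v)<d_b(v)$ outright. The second gives $1<1$, which is false, and this is exactly where strictness (as opposed to the ``at least as many'' condition of ordinary unfriendly partitions) matters. So only $(0,2)$ remains, meaning $d_b(v)=2=d(v)$ and both neighbours of $v$ carry the label $b=1-a$ opposite to $v$'s own.

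To make the statement a clean equivalence for later use, I would also note the converse. If both neighbours of $v$ have label $b$, then $d_a(v)=0<2=d_b(v)$, so $v$ is legal. Hence the condition is necessary and sufficient for legality at a degree-2 vertex.

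There is no real obstacle here. The only point needing care is the tie case $(1,1)$, which is excluded precisely because the inequality is strict. The same reasoning also justifies the remark preceding Theorem~\ref{thm_su2p} that for $\Delta\le 2$ the problem reduces to checking bipartiteness of each component: degree-1 vertices are forced similarly, and isolated vertices can never be legal.
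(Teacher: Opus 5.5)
Your proposal is correct and is essentially the paper's argument: fix $a=\pi(v)$, note $d_a(v)+d_b(v)=2$, and let the strict inequality $d_a(v)<d_b(v)$ force $(d_a(v),d_b(v))=(0,2)$. Your explicit treatment of the tie case $(1,1)$, the converse, and the bipartiteness remark for $\Delta\le 2$ are all sound; the paper leaves these implicit.
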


\begin{observation}[Constraints on Degree-3 Vertices] \label{ob_SU2P_d3}
In any legal SU2P, a degree-3 vertex $u$ with label $a$ can have at most one neighbour with label $a$.
\end{observation}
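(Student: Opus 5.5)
The statement to prove is Observation~\ref{ob_SU2P_d3}. I would argue directly from the definition of a strictly unfriendly 2-partition, by counting neighbours of $u$. The plan is to use the fact that the degree of $u$ splits among only two labels.

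Fix a legal SU2P $\pi$ with labels $L=\{0,1\}$. Let $u$ be a vertex of degree $3$ with $\pi(u)=a$, and write $b=1-a$. Every neighbour of $u$ carries label $a$ or label $b$, so $d_a(u)+d_b(u)=d(u)=3$.

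Legality of $u$, applied to the single other label $b\in L\setminus\{a\}$, gives the strict inequality $d_a(u)<d_b(u)=3-d_a(u)$. Rearranging yields $2d_a(u)<3$, that is, $d_a(u)<3/2$. Since $d_a(u)$ is a nonnegative integer, this forces $d_a(u)\le 1$, which is exactly the claim.

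I do not expect any real obstacle here. The only point needing care is that the definition requires a strict inequality. The same counting also recovers Observation~\ref{ob_SU2P_d2}: there $2d_a(u)<2$ forces $d_a(u)=0$.
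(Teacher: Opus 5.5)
Your proof is correct and follows essentially the same route as the paper: the paper also notes that the strict inequality $d_a(u) < d_b(u)$ with $d(u)=3$ forces $d_a(u) \in \{0,1\}$. You spell out the counting step $d_a(u) + d_b(u) = 3$, which gives $2d_a(u) < 3$, where the paper leaves it implicit.
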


\medskip
\noindent \textbf{Reduction Overview.}
Given a $(3,3)$-SAT instance $\phi$, we construct a graph $G_\phi$ consisting of three primary types of gadgets:
\begin{enumerate}
    \item Variable Gadgets: Represent the truth assignment of each variable and ensure consistency between its occurrences with different polarities.
    \item Clause Gadgets: Enforce the satisfiability of each clause.
    \item Constant Gadgets: Provide fixed-label neighbours to other gadgets, acting as truth-value anchors.
\end{enumerate}
In the following figures, \emph{square nodes} represent vertices whose labels are forced to a constant value by the forcing gadgets, while \emph{circular nodes} represent standard vertices whose labels depend on the truth assignment. Proofs of the propositions are deferred to appendix (Section~\ref{sec_app_prf2}).

\medskip
\noindent \textbf{Constant Gadget.} To provide consistent label vertices throughout the reduction, we construct an even cycle $C_{2k}$ consisting of vertices $u_1, u_2, \dots, u_{2k}$ (Figure~\ref{fig_SU2P_d3_cst} in Appendix). In the following content, all vertex subscripts are taken modulo $2k$ with representative indices in $\{1, \dots, 2k\}$, such that $u_0$ is identified with $u_{2k}$. The vertices are categorized as follows:
\begin{itemize}
    \item Internal vertices: Vertices with even indices $u_{2j}$ ($1\le j \le k$) have degree two, connected only to their neighbours $u_{2j-1}$ and $u_{2j+1}$ within the cycle.
    \item Anchor vertices: Vertices with odd indices $u_{2j+1}$ ($1\le j \le k$) have degree three, connected to $u_{2j}, u_{2j+2}$ and an additional external neighbour outside the gadget.
\end{itemize}

This construction allows us to give an arbitrary number of a specific label neighbour to external vertices. Note that for each constant neighbour provided, the gadget introduces exactly two vertices on average (one $u_{2j}$ and one $u_{2j+1}$), which we denote as the amortized cost per forced vertex.
\begin{proposition} \label{prop_SU2P_d3_cst}
    A partition $\pi$ of the constant gadget is legal if and only if the labels alternate around the cycle, i.e., $\pi(u_i) \neq \pi(u_{i+1})$ for all $1 \leq i \leq 2k$.
\end{proposition}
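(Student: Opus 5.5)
The plan is to argue the two directions separately, using only Observation~\ref{ob_SU2P_d2} and the direct definition of legality. I read ``legal'' as meaning that every vertex of the gadget is legal. Since the external neighbour of each anchor is outside the gadget and its label is not controlled, the converse direction has to be shown for \emph{every} label of those external neighbours.

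\emph{Necessity.} Suppose $\pi$ is legal. Each internal vertex $u_{2j}$ has degree two, so Observation~\ref{ob_SU2P_d2} forces $\pi(u_{2j-1}) \neq \pi(u_{2j})$ and $\pi(u_{2j+1}) \neq \pi(u_{2j})$. The cycle $C_{2k}$ alternates between even and odd indices, so every cycle edge $\{u_i,u_{i+1}\}$ has exactly one endpoint of even index. That endpoint is an internal vertex, so the edge joins two differently labelled vertices. Hence $\pi(u_i)\neq\pi(u_{i+1})$ for all $i$, with indices taken modulo $2k$. Note that the anchors impose no constraint in this direction.

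\emph{Sufficiency.} Suppose the labels alternate around the cycle.
\begin{itemize}
  \item An internal vertex $u_{2j}$ with label $a$ has both neighbours labelled $b$. Thus $d_a(u_{2j})=0<2=d_b(u_{2j})$, and $u_{2j}$ is legal.
  \item An anchor $u_{2j+1}$ with label $a$ has both cycle neighbours labelled $b$. Its only other neighbour is the external one, so $d_a(u_{2j+1})\le 1$ while $d_b(u_{2j+1})\ge 2$. The anchor is therefore legal whatever label its external neighbour receives.
\end{itemize}
This last point is the only place where care is needed, and it is the feature that makes the gadget usable. It also gives the intended consequence: under any legal partition all anchors carry the same label, namely the one opposite the internal vertices. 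By the symmetry $a\leftrightarrow b$, this label can be fixed to whichever constant the reduction requires.

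The main obstacle is only being precise about what ``legal partition of the gadget'' means in the presence of the external edges. I would state explicitly that sufficiency holds uniformly over all labels of the external neighbours, and that necessity uses only the degree-two internal vertices. With that settled, both directions are short counting arguments.
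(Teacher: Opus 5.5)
Your proof is correct and is the paper's argument. Necessity comes from applying Observation~\ref{ob_SU2P_d2} to the degree-two vertices $u_{2j}$; your remark that every cycle edge has an even-indexed endpoint is a tidier version of the paper's ``the constraint propagates around the even cycle''. Sufficiency is the direct count $d_a(u_{2j+1}) \le 1 < 2 \le d_b(u_{2j+1})$ at each anchor, whatever label the external neighbour has, and you rightly place that check in the sufficiency direction, whereas the paper discusses it inside its ($\Rightarrow$) paragraph.
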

Without loss of generality, we fix the label of all anchor vertices $u_{2j+1}$ to be $1$. We use these vertices as sources of the constant-1 neighbour in the subsequent construction.

\medskip
\noindent \textbf{Variable Gadget.} For each variable $x$ in the $(3,3)$-SAT instance $\phi$, we construct a variable gadget to represent its truth assignment and ensure consistency across its literal occurrences. As established in the preprocessing, each variable appears at most twice in one polarity and once in the other. Without loss of generality, let $x$ be the literal appearing twice (if any) and $\neg x$ be the literal appearing once. The gadget, shown in Figure~\ref{fig_SU2P_d3_var}, consists of four vertices:
\begin{itemize}
    \item Literal vertices: $v_1, v_2$ (representing upto two occurrences of $x$) and $v_3$ (representing the single occurrence of $\neg x$). These are degree-3 vertices connected to their respective clause gadgets.
    \item Auxiliary vertex: $v_4$ is a degree-2 vertex connected to $v_1$ and $v_2$, serving as a structural constraint to synchronize the labels of $v_1$ and $v_2$.
\end{itemize}

\begin{figure}
    \centering
    \begin{subfigure}[b]{0.4\textwidth}
        \centering
    \resizebox{0.99\textwidth}{!}{
    \begin{tikzpicture}[
normal/.style = {draw, regular polygon,regular polygon sides=360, very thick},
fixed/.style={draw, regular polygon,regular polygon sides=4, very thick},
balance/.style = {draw, regular polygon,regular polygon sides=360, fill=black},
]
\node[] (u1) [] {to clause};
\node[normal] (v1) [right=1cm of u1, label=above:$x$] {$v_1$};
\node[normal] (v2) [right=1cm of v1, label=above:$x$] {$v_2$};
\node[normal] (v3) [below=1cm of v1, label=below:$\lnot x$] {$v_3$};
\node[normal] (v4) [below=1cm of v2, label=below:$\lnot x$] {$v_4$};
\node[] (u2) [right=1cm of v2] {to clause};
\node[] (u3) [left=1cm of v3] {to clause};

\draw (v1) -- (u1);
\draw (v1) -- (v3);
\draw (v1) -- (v4);
\draw (u2) -- (v2);
\draw (v2) -- (v3);
\draw (v2) -- (v4);
\draw (u3) -- (v3);

\end{tikzpicture}
}
    \caption{}
    \label{fig_SU2P_d3_var}
    \end{subfigure}
    \hfill 
    \begin{subfigure}[b]{0.45\textwidth}
        \centering
        \resizebox{0.99\textwidth}{!}{
    \begin{tikzpicture}[
normal/.style = {draw, regular polygon,regular polygon sides=360, very thick},
fixed/.style={draw, regular polygon,regular polygon sides=4, very thick, minimum size=1.2cm},
balance/.style = {draw, regular polygon,regular polygon sides=360, fill=black},
]
\node[normal] (v1) [label=above:$0$] {$v_1$};
\node[normal] (v2) [right=0.5cm of v1, label=above:$1$] {$v_2$};
\node[normal] (v3) [right=0.5cm of v2, label=above:$0$] {$v_3$};
\node[fixed] (u2) [right=0.5cm of v3, label=above:$1$] {$u'$};
\node[] (v8) [below right=1cm of v1] {$z$};
\node[normal] (v4) [below left=1cm of v1] {$v_4$};
\node[normal] (v5) [below left=1cm of v4] {$v_5$};
\node[fixed] (u1) [below right=1cm of v4, label=above:$1$] {$u$};
\node[] (v6) [below left=1cm of v5] {$x$};
\node[] (v7) [below right=1cm of v5] {$y$};

\draw (v3) -- (u2);
\draw (v3) -- (v2);
\draw (v1) -- (v2);
\draw (v1) -- (v4);
\draw (v1) -- (v8);
\draw (v4) -- (u1);
\draw (v4) -- (v5);
\draw (v6) -- (v5);
\draw (v7) -- (v5);

\end{tikzpicture}
}
    \caption{Labels $\pi(v)$ are indicated above the vertices.}
    \label{fig_SU2P_d3_3cls}
    \end{subfigure}
    
    \caption{}
\end{figure}

In this construction, the label of a literal vertex $\pi(v_i)$ directly corresponds to the truth value assigned to that literal.
\begin{proposition} \label{prop_SU2P_d3_var}
    A variable gadget is legally partitioned if and only if the literal vertices satisfy the consistency and polarity requirements: $\pi(v_1) = \pi(v_2)$ and $\pi(v_3) = 1 - \pi(v_1)$.
\end{proposition}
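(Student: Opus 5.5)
Looking at Figure~\ref{fig_SU2P_d3_var}, the gadget is a $4$-cycle $v_1 v_3 v_2 v_4$ with a chord-free structure. The edges are $v_1v_3$, $v_1v_4$, $v_2v_3$, $v_2v_4$. In addition, each literal vertex $v_1,v_2,v_3$ carries one pendant edge to its clause gadget. So $v_4$ has degree $2$ and $v_1,v_2,v_3$ have degree $3$. The plan is to prove both directions, reading ``legally partitioned'' as: every gadget vertex whose full neighbourhood is determined is legal, with the external clause neighbour of each literal vertex treated as the remaining free slot. The forward direction uses only Observations~\ref{ob_SU2P_d2} and~\ref{ob_SU2P_d3}. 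The backward direction needs a local legality check that assumes the clause gadget supplies the correct label on the external edge.

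\emph{Necessity.} Since $v_4$ has degree $2$, Observation~\ref{ob_SU2P_d2} forces $\pi(v_1)=\pi(v_2)=1-\pi(v_4)$, which gives the consistency requirement $\pi(v_1)=\pi(v_2)$. For the polarity requirement, suppose for contradiction that $\pi(v_3)=\pi(v_1)=a$.
\begin{itemize}
\item Then $v_1$ has $v_3$ as a same-label neighbour, while $v_4$ has label $1-a$.
\item Now consider $v_3$. Its gadget neighbours are $v_1$ and $v_2$, both labelled $a$, so $v_3$ already has two neighbours with its own label.
\item This contradicts Observation~\ref{ob_SU2P_d3}, which allows a degree-$3$ vertex at most one same-label neighbour.
\end{itemize}
Hence $\pi(v_3)=1-\pi(v_1)$, and moreover $\pi(v_4)=\pi(v_3)$. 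This matches the label $\neg x$ shown on $v_4$ in the figure.

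\emph{Sufficiency.} Assume $\pi(v_1)=\pi(v_2)=a$ and $\pi(v_3)=\pi(v_4)=1-a$. We check each vertex.
\begin{itemize}
\item $v_4$ has both neighbours labelled $a\ne\pi(v_4)$, so it is legal.
\item $v_1$ has gadget neighbours $v_3,v_4$, both labelled $1-a$. So $d_{1-a}(v_1)\ge 2>1\ge d_a(v_1)$ whatever its external neighbour is. The same holds for $v_2$.
\item $v_3$ has gadget neighbours $v_1,v_2$, both labelled $a$, so $d_a(v_3)\ge 2>1\ge d_{1-a}(v_3)$.
\end{itemize}
Thus every vertex of the gadget is legal regardless of the labels on the external clause edges. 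This is the key point: the gadget imposes no constraint on the clause side beyond fixing the literal labels.

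The main obstacle is interpretive rather than technical: we must state precisely what ``legally partitioned'' means for a gadget whose vertices have external neighbours. I would handle this by showing legality of the gadget's vertices for \emph{every} labelling of the external neighbours in the sufficiency direction. In the necessity direction I would argue using only gadget-internal edges, which are fixed regardless of the external labels. With that, both directions hold uniformly, and the proposition composes cleanly with the clause gadget analysis.
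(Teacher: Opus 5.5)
Your proof is correct and follows the paper's argument: Observation~\ref{ob_SU2P_d2} at $v_4$ gives $\pi(v_1)=\pi(v_2)$, and Observation~\ref{ob_SU2P_d3} at $v_3$ gives $\pi(v_3)=1-\pi(v_1)$. Your sufficiency check spells out what the paper calls immediate. There, $\pi(v_4)=1-a$ is a choice rather than part of the hypothesis, and necessity shows it is forced; legality then holds regardless of the external labels.
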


\medskip
\noindent \textbf{Clause Gadget.} For each clause $C = (x \lor y \lor z)$, we construct a clause gadget as illustrated in Figure~\ref{fig_SU2P_d3_3cls}. This construction utilizes five internal vertices $v_1, \dots, v_5$ and two constant-1 neighbours provided by anchor vertices $u, u'$ from the constant gadget. The literal vertices $x, y, z$ (from their respective variable gadgets) are connected to $v_1$ and $v_5$ as shown. On average, each 3-literal clause introduces 9 vertices: five internal vertices and four amortized vertices from the constant gadget ($2 \times 2$ vertices per constant-1 neighbour).

\begin{proposition} \label{prop_SU2P_d3_3cls}
    A clause gadget is legally partitioned if and only if at least one of its literal vertices $\{x, y, z\}$ is assigned the label $1$ (True).
\end{proposition}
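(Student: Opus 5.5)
The plan is to prove both directions by forcing labels along the chain $u' \to v_3 \to v_2 \to v_1$ and then doing a short case analysis on $\pi(z)$. Throughout, $u$ and $u'$ are anchor vertices of the constant gadget and have label $1$ (Proposition~\ref{prop_SU2P_d3_cst}). I treat the literal vertices $x,y,z$ as external. By Proposition~\ref{prop_SU2P_d3_var}, the legality of a literal vertex is already guaranteed inside its variable gadget: its two in-gadget neighbours carry the opposite label, whatever label the clause neighbour has. So only the internal vertices $v_1,\dots,v_5$ need checking. Reading off the figure, the degrees are: $v_2$ and $v_3$ have degree two; $v_1$ (neighbours $v_2,v_4,z$), $v_4$ (neighbours $v_1,u,v_5$) and $v_5$ (neighbours $v_4,x,y$) have degree three.

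\textbf{Forced part.} By Observation~\ref{ob_SU2P_d2} applied to $v_3$, we get $\pi(v_3)=1-\pi(u')=0$ and $\pi(v_2)=1$. Applying it to $v_2$ gives $\pi(v_1)=0$, which is consistent with $\pi(v_3)=0$. These are exactly the labels shown in the figure, and $v_2,v_3$ are then legal. By Observation~\ref{ob_SU2P_d3}, $v_1$ (label $0$) may have at most one $0$-neighbour; since $\pi(v_2)=1$, this means $\pi(v_4)$ and $\pi(z)$ are not both $0$. Next I analyse $v_4$, whose neighbours $v_1$ and $u$ have labels $0$ and $1$. If $\pi(v_4)=0$, then $v_1$ is already a same-label neighbour, so $\pi(v_5)=1$ is required. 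If $\pi(v_4)=1$, then $u$ is already a same-label neighbour, so $\pi(v_5)=0$ is required. In either case $\pi(v_5)=1-\pi(v_4)$, and $v_4$ is then legal.

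\textbf{Necessity.} Suppose $x,y,z$ all have label $0$. Then $\pi(z)=0$ forces $\pi(v_4)=1$ (from the constraint at $v_1$), hence $\pi(v_5)=0$. Now $v_5$ has label $0$ with the two $0$-neighbours $x$ and $y$, which violates Observation~\ref{ob_SU2P_d3}. So no legal partition exists.

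\textbf{Sufficiency.} I split on $\pi(z)$.
\begin{itemize}
\item If $\pi(z)=0$, then $x$ or $y$ has label $1$. Set $\pi(v_4)=1$ and $\pi(v_5)=0$. Then $v_5$ has the opposite-label neighbour $v_4$ and at most one $0$-neighbour among $x,y$, so it is legal.
\item If $\pi(z)=1$, then $v_1$ is legal whatever $\pi(v_4)$ is, and I choose $\pi(v_4)$ to fit $x,y$. If $\pi(x)=\pi(y)=1$, take $\pi(v_4)=1$ and $\pi(v_5)=0$. Otherwise take $\pi(v_4)=0$ and $\pi(v_5)=1$, so that $v_5$ has at most one $1$-neighbour among $x,y$.
\end{itemize}
In each case I will recheck $v_1$, $v_4$ and $v_5$ against Observation~\ref{ob_SU2P_d3}; this is routine. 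I do not expect a real obstacle. The only delicate point is stating precisely that the literal vertices' legality is certified by their variable gadgets, so that the equivalence is about the internal vertices only.
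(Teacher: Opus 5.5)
Your proof is correct and follows essentially the same route as the paper. It uses the same forced chain $u' \to v_3 \to v_2 \to v_1$, the same constraint at $v_1$ forcing $\pi(v_4)=1$ when $\pi(z)=0$, and the same propagation through $v_4$ to $v_5$; your necessity argument is just the contrapositive of the paper's case split on $\pi(z)$.

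Your explicit labelings for the converse direction check out in every case and fill in what the paper calls straightforward. One small point: the anchors $u,u'$ should be stated as certified legal by their own gadget (Proposition~\ref{prop_SU2P_d3_cst}) regardless of their external neighbour, just as you did for the literal vertices.
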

For clauses containing only two literals, say $(x \lor y)$, the gadget is simplified by replacing the connections of $v_1$ by $x$ and $y$ (shown in Appendix, Figure~\ref{fig_SU2P_d3_2cls}). This gadget uses 5 vertices, and thus $5/2=2.5$ vertices per literal. As a 3-clause gadget uses $9/3=3$ vertices per clause, the worst-case size of the reduction is bounded by the 3-clause case (as the more the vertices are used through the reduction, the less the ETH-based bound is). Consequently, the ETH-based lower bound remains robust under this omission.

\medskip
\noindent \textbf{Full Reduction.} With the gadgets defined, we now summarize the reduction and analyze the resulting complexity lower bound.
\begin{proof}[Proof of Theorem~\ref{thm_su2p}]
    Given an instance $\phi$ of $(3,3)$-SAT with $n$ variables and $m$ clauses, we construct the graph $G_\phi$ (an illustrative example shown in Figure~\ref{fig_SU2P_emp}) by assembling variable gadgets, clause gadgets, and the necessary constant neighbours as described earlier in the section. By Propositions~\ref{prop_SU2P_d3_cst}, ~\ref{prop_SU2P_d3_var}, and ~\ref{prop_SU2P_d3_3cls}, it follows that $G_\phi$ admits a strictly unfriendly 2-partition if and only if $\phi$ is satisfiable, giving SU2P is NP-hard.

    The total number of vertices $N$ in $G_\phi$ is bounded as follows:
    \begin{itemize}
        \item Variables: $n$ variable gadgets $\times$ 4 vertices/gadget = $4n$.
        \item Clauses: $l\leq3n$ literals $\times$ 3 vertices/literal (worst-case for 3-clauses) $\leq 9n$.
    \end{itemize}
    Thus, $N \leq 4n + 9n = 13n$. By the linear nature of the reduction, any algorithm solving SU2P in $2^{o(N)}$ time would imply a $2^{o(n)}$ time algorithm for $(3,3)$-SAT, contradicting ETH. Specifically:
     $$c_{(3,3)\text{-SAT}}^n \leq c_{\text{SU2P}}^{13n} \implies c_{\text{SU2P}} \geq c_{(3,3)\text{-SAT}}^{1/13}.$$
    This completes the proof.
\end{proof}

\begin{figure}
    \centering
    \resizebox{0.8\textwidth}{!}{
    \begin{tikzpicture}[
    v_node/.style={circle, draw, minimum size=5mm, thick, inner sep=0pt},
    u_node/.style={rectangle, draw, minimum size=5mm, thick, inner sep=0pt},
    fill_node/.style={fill=gray!40}, 
    edge/.style={draw, thick},
    link/.style={draw, thick},
    cycle_link/.style={draw, black, thick, dashed},
    cycle_link_2/.style={draw, black, thick}
]

    \foreach \x / \name in {0/a1, 8.5/a2} { 
        \begin{scope}[xshift=\x cm]
            \ifnum\pdfstrcmp{\name}{a1}=0
                \node[v_node] (\name-v1) {};
                \node[v_node, fill_node] (\name-v2) [right=0.5cm of \name-v1] {};
                \node[v_node] (\name-v3) [right=0.5cm of \name-v2] {};
                \node[u_node, fill_node] (\name-u2) [right=0.5cm of \name-v3] {};
                \node[v_node, fill_node](\name-v4) [below left=0.8cm and 0.6cm of \name-v1] {};
                \node[v_node] (\name-v5) [below left=0.8cm and 0.6cm of \name-v4] {};
                \node[u_node, fill_node] (\name-u1) [below right=0.8cm and 0.4cm of \name-v4] {};
                
                \node[above=2pt of \name-v1] {$(x_1 \lor x_2 \lor \neg x_3)$};
            \else
                \node[v_node] (\name-v1) {};
                \node[v_node, fill_node] (\name-v2) [right=0.5cm of \name-v1] {};
                \node[v_node] (\name-v3) [right=0.5cm of \name-v2] {};
                \node[u_node, fill_node] (\name-u2) [right=0.5cm of \name-v3] {};
                \node[v_node, fill_node] (\name-v4) [below left=0.8cm and 0.6cm of \name-v1] {};
                \node[v_node] (\name-v5) [below left=0.8cm and 0.6cm of \name-v4] {};
                \node[u_node, fill_node] (\name-u1) [below right=0.8cm and 0.4cm of \name-v4] {};
                
                \node[above=2pt of \name-v1] {$(\neg x_1 \lor \neg x_2 \lor x_3)$};
            \fi

            \draw[edge] (\name-v1) -- (\name-v2); 
            \draw[edge] (\name-v2) -- (\name-v3); 
            \draw[edge] (\name-v3) -- (\name-u2);
            \draw[edge] (\name-v1) -- (\name-v4); 
            \draw[edge] (\name-v4) -- (\name-v5); 
            \draw[edge] (\name-v4) -- (\name-u1);
        \end{scope}
    }

    \foreach \x / \name / \i in {0/b1/1, 4.25/b2/2, 8.5/b3/3} {
        \begin{scope}[xshift=\x cm, yshift=-5.5cm]
            \ifnum\i=2
                \node[v_node] (\name-v1) {};
                \node[v_node, fill_node] (\name-v3) [below=1.1cm of \name-v1] {};
                \node[v_node] (\name-v2) [right=1.5cm of \name-v1] {};
                \node[v_node, fill_node] (\name-v4) [below=1.1cm of \name-v2] {};
            \else
                \node[v_node, fill_node] (\name-v1) {};
                \node[v_node] (\name-v3) [below=1.1cm of \name-v1] {};
                \node[v_node, fill_node] (\name-v2) [right=1.5cm of \name-v1] {};
                \node[v_node] (\name-v4) [below=1.1cm of \name-v2] {};
            \fi

            \node[above=2pt of \name-v1] {$x_{\i}$};
            \node[above=2pt of \name-v2] {$x_{\i}$};
            \node[below=2pt of \name-v3] {$\neg x_{\i}$};
            \node[below=2pt of \name-v4] {$\neg x_{\i}$};

            \draw[edge] (\name-v1) -- (\name-v3); \draw[edge] (\name-v2) -- (\name-v4);
            \draw[edge] (\name-v1) -- (\name-v4); \draw[edge] (\name-v3) -- (\name-v2);
        \end{scope}
    }

    \draw[link] (b1-v1) -- (a1-v5);
    \draw[link] (b2-v1) -- (a1-v5);
    \draw[cycle_link_2] (b3-v3) to[bend right=10] (a1-v1);
    
    \draw[cycle_link_2] (b1-v4) to[bend left=10] (a2-v5);
    \draw[link] (b2-v3) -- (a2-v5); 
    \draw[cycle_link_2] (b3-v2) to[bend left=10] (a2-v1);

    \draw[cycle_link] (a1-u1) to[bend left=15] (a1-u2);
    \draw[cycle_link] (a1-u2) to[out=40, in=140] (a2-u2);
    
    \draw[cycle_link] (a2-u2) to[bend left=45] node[midway, right=0.2cm] {constant gadget} (a2-u1);
    
    \draw[cycle_link] (a2-u1) to[out=230, in=310] (a1-u1);

\end{tikzpicture}
}
    \caption{A legal SU2P on $G_\phi$, where $\phi=(x_1\lor x_2\lor\lnot x_3)\land(\lnot x_1\lor \lnot x_2\lor x_3)$ and the truth assignment $(x_1,x_2,x_3)=(1,0,1)$. Grey vertices have label 1; blank vertices have label 0.}
    \label{fig_SU2P_emp}
\end{figure}

\section{Hardness of Strictly 3-Partition (SU3P)}
\label{sec_su3p}

The complexity analysis of SU3P is similar to that of SU2P, yet the shift to three partition introduces significantly more stringent local constraints. In a 3-partition, for a vertex to be legal, its neighbours must be partitioned such that each of the other two partition classes contains strictly more neighbours than the vertex's own partition class. Similar to the $k=2$ case, SU3P is efficiently solvable on graphs with maximum degree two (where it requires the neighbours of each vertex to belong to distinct complementary partition classes). In this section, we prove that SU3P becomes NP-hard as soon as the maximum degree reaches three, thereby pinpointing the degree-based complexity transition.
\begin{theorem} \label{thm_SU3P}
    SU3P is NP-hard on graphs with maximum degree three. Assuming ETH, it cannot be solved in $2^{o(n)}$ time; specifically, $c_{(3,3)\text{-SAT}}^{1/38} \leq c_{\text{SU3P}}$.
\end{theorem}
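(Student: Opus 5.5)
We follow the template of Theorem~\ref{thm_su2p}: a linear reduction from preprocessed $(3,3)$-SAT to SU3P on subcubic graphs, assembled from constant, variable and clause gadgets. The final vertex count should be at most $38n$, which gives $c_{(3,3)\text{-SAT}}^{1/38}\le c_{\text{SU3P}}$ by the same counting as before together with Lemma~\ref{lem_33}. The labels are $L=\{0,1,2\}$.

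\textbf{Local observations.} First we record the analogues of Observations~\ref{ob_SU2P_d2} and~\ref{ob_SU2P_d3}.
\begin{itemize}
\item A degree-2 vertex with label $a$ must have its two neighbours labelled with the two distinct labels of $L\setminus\{a\}$.
\item A degree-3 vertex $v$ with label $a$ has $d_a(v)=0$. Indeed, $d_a(v)=1$ would force the other two labels to receive at least $2+2$ neighbours, which is impossible.
\item Hence, for a degree-3 vertex, the three neighbours carry labels from $L\setminus\{a\}$, each of the two other labels appears, and the pattern is $2{+}1$.
\end{itemize}
Thus the whole graph is properly 3-coloured, degree-2 vertices see both other colours, and degree-3 vertices see both other colours. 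This is a ``locally surjective'' colouring constraint.

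\textbf{Gadgets.}
\begin{itemize}
\item \textbf{Constant gadget.} We aim for a structure, such as a cycle $C_{3k}$ with pendant-type anchor vertices or a ring of small forced blocks, whose legal labelings are unique up to a permutation of labels. It should provide anchors with fixed labels, say $0$ and $1$, to external vertices at bounded amortized cost. We would first try a cycle $u_1\cdots u_{3k}$ with degree-2 internal vertices. By the degree-2 rule, consecutive triples must use all three labels, so the labels cycle $0,1,2,0,1,2,\dots$ up to orientation and permutation. Some vertices then receive an external edge. We must check that the added edge respects the degree-3 rule, which requires the external neighbour to carry the label not yet occurring twice.
\item \textbf{Variable gadget.} Truth is encoded by two labels, e.g.\ $1=$ true and $2=$ false, with label $0$ reserved by attaching a constant-$0$ anchor to each literal vertex. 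This forbids label $0$ on the literal vertex by properness. Consistency between $v_1,v_2$ and the negated occurrence $v_3$ is then enforced through short paths. Degree-2 path vertices propagate labels deterministically once two labels are known: a vertex adjacent to a constant-$0$ vertex and to a literal labelled $1$ must be $2$.
\item \textbf{Clause gadget.} It must be legal iff some literal input has the true label. The design uses degree-3 ``OR'' vertices whose third neighbour is forced, so that an all-false input forces a vertex to see a single label among its neighbours. That violates the requirement that both other colours appear.
\end{itemize}

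\textbf{Correctness and counting.} We prove a proposition for each gadget characterising its legal partitions, in the style of Propositions~\ref{prop_SU2P_d3_cst}--\ref{prop_SU2P_d3_3cls}, by case analysis using the observations above. We then glue the gadgets and argue both directions: a satisfying assignment extends to a legal labeling, and any legal labeling, after normalising the constant labels, yields a satisfying assignment. Finally we count vertices per variable and per literal, including the amortized anchor cost. The target is that variable gadgets together with $3n$ literal clause portions sum to at most $38n$, e.g.\ $8n+10\cdot 3n$. Two-literal clauses are handled by a simplified gadget that uses no more vertices per literal.

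\textbf{Main obstacle.} The hardest step is designing the clause and constant gadgets within degree three. Every degree-3 vertex must see exactly one label twice and the other once, never its own. Because of this, attaching anchors and literal edges can accidentally over-constrain legal configurations, for instance by killing the satisfying cases or forcing a false literal pattern. I would first search small gadgets systematically, e.g.\ by brute force over graphs on at most $10$ vertices with designated ports, checking the exact legal input/output relation. Only afterwards would I optimise the vertex count toward the constant $38$.
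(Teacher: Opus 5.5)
You have the right architecture and the right local rules: no vertex may have a neighbour with its own label, and every vertex must see both other labels. But the proposal stops where the proof actually begins. No constant, variable or clause gadget is specified or verified, and your final paragraph defers their design to a computer search. Both the NP-hardness and the constant $1/38$ consist precisely of exhibiting such gadgets, proving the characterisation of their legal partitions, and counting vertices. So ``$8n+10\cdot 3n$'' is a target rather than a bound. Your claim that 2-literal clauses cost no more per literal is also unsupported. In the paper the 2-literal gadget, which replaces the missing literal by a constant-0 anchor, costs $8$ vertices per literal against $13/3$, and it is exactly what produces $38$.

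The one construction you do start also has a hole. Your period-3 argument on $C_{3k}$ uses the degree-2 rule at every vertex, and it breaks as soon as anchors receive external edges, since a degree-3 anchor no longer needs two distinct labels among its cycle neighbours. The paper's Forcing Cycle fixes this by spacing: consecutive degree-3 anchors are separated by exactly two degree-2 vertices. The rules at $u_{3j+2},u_{3j+3}$ then make both $\{u_{3j+1},u_{3j+2},u_{3j+3}\}$ and $\{u_{3j+2},u_{3j+3},u_{3j+4}\}$ rainbow, forcing $\pi(u_{3j+1})=\pi(u_{3j+4})$ regardless of the anchors' degrees. A second such cycle, whose first anchor is joined to an anchor of the first, supplies a second, different constant at $3$ vertices per signal.

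Two further ideas are missing.

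\emph{Variable gadget.} The paper takes two adjacent degree-3 vertices $v_4,v_5$, each joined to a constant-2 anchor, so that both lie in $\{0,1\}$ and differ. It copies their labels along two-vertex degree-2 paths to ports $v_1$ and $v_8$. The port $v_1$ carries \emph{both} clause edges of $x$ and inherits its domain restriction through the path; this is how fan-out within degree three is avoided, at $8+2\cdot 3=14$ vertices per variable. Your scheme puts a constant anchor on every occurrence vertex, which leaves each such vertex a single free edge and leaves this fan-out problem open.

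\emph{Clause gadget.} A degree-3 vertex whose neighbours are all literals enforces only not-all-equal. An OR therefore needs one neighbour fixed to the false label, leaving room for just two literals. The paper splits each 3-literal clause using an auxiliary complementary pair $v_1,v_2$ (labels $c,\lnot c$, again joined to constant-2 anchors). The vertex $v_3$ adjacent to $v_1,x,y$ is forced to label $2$ and needs both a $0$ and a $1$ among $c,x,y$. The vertex $v_4$, adjacent to $v_2$, $z$ and a constant-0 anchor, needs a $1$ among $\lnot c,z$. Projecting out $c$, this is exactly $x\lor y\lor z$.

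With $14n$ vertices for variables and at most $8\cdot 3n$ for clauses one obtains $N\le 38n$, and hence the stated bound.
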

The reduction from $(3,3)$-SAT to SU3P maintains the modular architecture of variable, clause, and constant gadgets. However, the gadgets must be redesigned to accommodate the unique requirements of a 3-partition. As the reduction remains linear in terms of the number of variables, the ETH-based lower bound follows directly from Lemma~\ref{lem_33}.

\medskip
\noindent \textbf{Key Observations for SU3P.}
To simplify the reduction and the subsequent proof of correctness, we formalize the local constraints imposed by the strictly unfriendly 3-partition condition on low-degree vertices. Throughout this section, the label set is $L=\{0,1,2\}$. The following observations are straightforward; their formal proofs are deferred to appendix (Section~\ref{sec_app_prf3}).

\begin{observation}[Zero-neighbour Requirement] \label{ob_SU3P_zero}
    In any legal SU3P of a graph with maximum degree $\Delta=3$, every vertex $v$ must satisfy $d_{\pi(v)}(v) = 0$.
\end{observation}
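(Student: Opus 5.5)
The plan is a direct counting argument on the degree of $v$. Fix a legal SU3P $\pi$ with labels $L=\{0,1,2\}$, and let $v$ be any vertex with $a=\pi(v)$. Let $b,c$ be the two other labels. Legality requires
$$d_a(v)<d_b(v) \quad\text{and}\quad d_a(v)<d_c(v).$$
Since $d_b(v),d_c(v)$ are integers, this gives $d_b(v)\ge d_a(v)+1$ and $d_c(v)\ge d_a(v)+1$. Summing the three counts yields
$$d(v)=d_a(v)+d_b(v)+d_c(v)\ge 3d_a(v)+2.$$

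Now I use the degree bound $d(v)\le\Delta=3$. It gives $3d_a(v)+2\le 3$, so $d_a(v)\le 1/3$. As $d_a(v)$ is a nonnegative integer, $d_a(v)=0$, which is the claim.

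As a sanity check on the consequences, the same inequality forces $d(v)\ge 2$. Each vertex therefore has at least one neighbour in each of the two foreign classes. When $d(v)=3$, the third neighbour also lies in a foreign class, since it cannot share $v$'s label.

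There is no real obstacle here. The only point to state carefully is the integrality step that turns the strict inequalities into $\ge d_a(v)+1$. The argument is purely local, so no gadget structure is needed.
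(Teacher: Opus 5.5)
Your proposal is correct and takes essentially the paper's approach. The paper argues by contradiction: $d_{\pi(v)}(v)\ge 1$ forces $d_j(v)\ge 2$ for both other labels, so $d(v)\ge 5>\Delta$. That is your bound $d(v)\ge 3d_{\pi(v)}(v)+2$ evaluated at $d_{\pi(v)}(v)=1$, and both versions in fact only need $\Delta\le 4$.
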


\begin{observation}[Constraints on Degree-2 Vertices] \label{ob_SU3P_d2}
    In any legal SU3P, a degree-2 vertex $u$ must have its two neighbours having the two distinct labels in $L \setminus \{\pi(u)\}$.
\end{observation}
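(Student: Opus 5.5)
The claim follows directly from the definition of a legal SU3P. Let $u$ be a degree-2 vertex with neighbours $w_1, w_2$, and let $a = \pi(u)$. By definition, $d_a(u) < d_i(u)$ for each of the two labels $i \in L \setminus \{a\}$. Since all $d_i(u)$ are nonnegative integers, this gives $d_i(u) \ge d_a(u) + 1 \ge 1$ for both such labels.

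Next we count neighbours. We have $d_0(u) + d_1(u) + d_2(u) = d(u) = 2$, and the two labels other than $a$ already contribute at least $1 + 1 = 2$. Hence $d_a(u) = 0$ and $d_i(u) = 1$ for each $i \in L \setminus \{a\}$.

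So neither $w_1$ nor $w_2$ carries label $a$, and each of the two remaining labels appears on exactly one of them. Equivalently, $\{\pi(w_1), \pi(w_2)\} = L \setminus \{\pi(u)\}$, which is the statement.

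We do not need Observation~\ref{ob_SU3P_zero}, although it is consistent with this argument. There is no real obstacle here; the one point to handle explicitly is the integer step that turns each strict inequality into $d_i(u) \ge 1$.
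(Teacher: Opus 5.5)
Your proof is correct and follows essentially the same approach as the paper: both read the forced $(0,1,1)$ split directly off the strict-inequality definition together with $d(u)=2$. The paper fixes $\pi(u)=2$ by symmetry and simply states the resulting labels. You handle an arbitrary label $a=\pi(u)$ and write out the integrality and counting step ($d_i(u)\ge 1$ for both $i\neq a$, hence $d_a(u)=0$ and $d_i(u)=1$) that the paper leaves implicit.
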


\begin{observation}[Constraints on Degree-3 Vertices] \label{ob_SU3P_d3}
    In any legal SU3P, a degree-3 vertex $u$ must have its three neighbours split between the two labels other than $\pi(u)$ in a $(2, 1)$ or $(1, 2)$ distribution.
\end{observation}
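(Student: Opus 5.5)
The statement to prove is Observation~\ref{ob_SU3P_d3}: a degree-3 vertex $u$ in a legal SU3P has its three neighbours split between the two labels other than $\pi(u)$ in a $(2,1)$ or $(1,2)$ pattern. The plan is a short counting argument on the neighbourhood of $u$, built on Observation~\ref{ob_SU3P_zero}. Let $\pi(u)=a$ and let $b,c$ be the two other labels in $L=\{0,1,2\}$.

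First, I would show that no neighbour of $u$ has label $a$, i.e.\ $d_a(u)=0$. Observation~\ref{ob_SU3P_zero} gives this directly when $\Delta=3$, but it is cleaner to argue it from the definition so that the observation holds in any graph. Legality of $u$ requires $d_b(u)\ge d_a(u)+1$ and $d_c(u)\ge d_a(u)+1$. Summing,
\[
3 = d_a(u)+d_b(u)+d_c(u) \ge 3d_a(u)+2,
\]
so $d_a(u)\le 1/3$, and hence $d_a(u)=0$.

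Second, I would pin down how the neighbours split between $b$ and $c$. Since $d_a(u)=0$, we have $d_b(u)+d_c(u)=3$, and legality forces $d_b(u)\ge 1$ and $d_c(u)\ge 1$. The only nonnegative integer solutions are $(d_b(u),d_c(u))\in\{(2,1),(1,2)\}$, which is exactly the claimed distribution.

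For completeness I would also note the converse, which the later gadget proofs will likely use. Any $(2,1)$ or $(1,2)$ split with $d_a(u)=0$ satisfies $0<1\le d_b(u),d_c(u)$, so $u$ is legal. Thus the split is also sufficient for legality of a degree-3 vertex.

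There is no real obstacle here. The only point needing care is justifying $d_a(u)=0$ independently of the global degree bound, which the summation above handles.
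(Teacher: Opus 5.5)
Your proof is correct and follows essentially the same route as the paper, which fixes $\pi(u)=2$ without loss of generality and simply asserts that the neighbour labels must form the multiset $\{0,0,1\}$ or $\{0,1,1\}$, implicitly relying on the zero-neighbour count of Observation~\ref{ob_SU3P_zero}. Your inequality $3 \ge 3d_a(u)+2$ makes that step explicit and local to $u$, so it does not need the global bound $\Delta=3$; this tightens the paper's argument rather than changing it.
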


\begin{observation}[Label Propagation via Degree-2 Paths] \label{ob_SU3P_ppg}
    In any legal SU3P, if two vertices $v_1, v_2$ are connected by a path of two degree-2 vertices $(w_1, w_2)$, then $v_1$ and $v_2$ must have the same label.
\end{observation}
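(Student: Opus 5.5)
Observation~\ref{ob_SU3P_ppg} follows directly from Observation~\ref{ob_SU3P_d2}, applied once to each of the two degree-2 vertices on the path $v_1 - w_1 - w_2 - v_2$. Set $a = \pi(w_1)$ and $b = \pi(w_2)$.

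First, $w_1$ and $w_2$ are adjacent, and by Observation~\ref{ob_SU3P_d2} a degree-2 vertex has no neighbour sharing its label. Hence $a \neq b$. Let $c$ denote the unique third label in $L = \{0,1,2\}$, so that $\{a,b,c\} = L$.

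Next, apply Observation~\ref{ob_SU3P_d2} to $w_1$. Its neighbours are exactly $v_1$ and $w_2$, and these two must carry the two distinct labels in $L \setminus \{a\} = \{b, c\}$. Since $\pi(w_2) = b$, this forces $\pi(v_1) = c$.

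Symmetrically, the neighbours of $w_2$ are exactly $w_1$ and $v_2$, and they must carry the labels $L \setminus \{b\} = \{a, c\}$. Since $\pi(w_1) = a$, this forces $\pi(v_2) = c$. Therefore $\pi(v_1) = \pi(v_2) = c$.

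The only point requiring care is that $w_1$ and $w_2$ have degree exactly two. This guarantees that their neighbourhoods are precisely $\{v_1, w_2\}$ and $\{w_1, v_2\}$, so Observation~\ref{ob_SU3P_d2} pins down the remaining label uniquely. No degree assumption on $v_1$ or $v_2$ is needed, and the argument does not use that $v_1 \neq v_2$. There is no substantive obstacle beyond invoking Observation~\ref{ob_SU3P_d2}.
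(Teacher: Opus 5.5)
Your proof is correct and follows the same route as the paper: apply Observation~\ref{ob_SU3P_d2} first to $w_1$ and then to $w_2$, which forces both $v_1$ and $v_2$ onto the unique label missing from $\{\pi(w_1),\pi(w_2)\}$. The only difference is cosmetic: the paper fixes labels by symmetry ($\pi(v_1)=2$, $\pi(w_1)=0$), whereas you keep them general as $a,b,c$, which avoids the implicit WLOG step.
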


\medskip
\noindent \textbf{Reduction Overview.}
The reduction paradigm mirrors that of SU2P, utilizing three categories of gadgets: constant, variable, and clause. However, the construction in the $k=3$ case requires two distinct constant signals, namely \emph{constant-2} and \emph{constant-0}, to restrict the literal vertices to the truth-value set $\{0, 1\}$. These dual signals can be generated by coupling two constant cycles, effectively providing the necessary reference labels for the rest of the construction. Proofs of the propositions are deferred to appendix (Section~\ref{sec_app_prf3}).

\medskip
\noindent \textbf{Constant Gadgets.} To provide stable truth-value gadgets in the SU3P reduction, we define a generalized \emph{Forcing Cycle} that propagates a fixed label to multiple output vertices. A Forcing Cycle $C_{3k}$ (Figure~\ref{fig_SU3P_d3_cst2} in Appendix) consists of $3k$ vertices $\{u_1, u_2, \dots, u_{3k}\}$ organized into functional blocks of three. In the following content, all vertex subscripts are taken modulo $3k$ with representative indices in $\{1, \dots, 3k\}$, such that $u_0$ is identified with $u_{3k}$. For each $j \in \{0, \dots, k-1\}$:
\begin{itemize}
    \item Anchor vertex: $u_{3j+1}$ is a degree-3 vertex. It serves as an interface to either an external forcing signal or an output literal.
    \item Propagation vertices: $u_{3j+2}$ and $u_{3j+3}$ are degree-2 vertices that connect $u_{3j+1}$ to the next anchor vertex $u_{3j+4}$.
\end{itemize}
On average, each forced constant vertex requires three vertices ($u_{3j+1}, u_{3j+2}, u_{3j+3}$), resulting in an amortized cost of 3 vertices per signal.

\begin{proposition} \label{prop_SU3P_d3_cst}
    In any legal strictly unfriendly 3-partition $\pi$ of a Forcing Cycle, all anchor vertices must share the same label, i.e., $\pi(u_{3j+1}) = \pi(u_{3j+4})$ for all $j$.
\end{proposition}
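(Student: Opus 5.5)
The plan is a purely local argument along each block of the cycle, using Observation~\ref{ob_SU3P_ppg}. For each $j$, the anchor $u_{3j+1}$ is joined to the next anchor $u_{3j+4}$ by the path $u_{3j+1}, u_{3j+2}, u_{3j+3}, u_{3j+4}$. Both internal vertices $u_{3j+2}$ and $u_{3j+3}$ have degree $2$. So the two anchors are linked by a path through exactly two degree-2 vertices, and Observation~\ref{ob_SU3P_ppg} should give $\pi(u_{3j+1}) = \pi(u_{3j+4})$ at once. Running this over $j = 0, \dots, k-1$, with indices taken modulo $3k$, then shows that all anchors share one label.

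Because the paper defers the proof of Observation~\ref{ob_SU3P_ppg}, I would also verify that step directly from Observation~\ref{ob_SU3P_d2}, so the proposition is self-contained. Write $a = \pi(u_{3j+1})$. Since $u_{3j+2}$ has degree $2$, its neighbours $u_{3j+1}$ and $u_{3j+3}$ must carry the two distinct labels of $L \setminus \{\pi(u_{3j+2})\}$. Hence $\pi(u_{3j+2}) = b \ne a$, and $\pi(u_{3j+3}) = c$, where $c$ is the third label, so $\{a, b, c\} = L$. Applying Observation~\ref{ob_SU3P_d2} to $u_{3j+3}$, its neighbours $u_{3j+2}$ (label $b$) and $u_{3j+4}$ must carry the two labels of $L \setminus \{c\} = \{a, b\}$. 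This forces $\pi(u_{3j+4}) = a$.

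Only the degree-2 vertices are used, so the external neighbours of the anchors play no part and need no assumption. The one point to watch is the degenerate case $k = 1$. There the "next anchor" is $u_{3k+1} = u_1$ itself, so the statement holds trivially, although the argument still goes through: it forces $u_1$ to have label $a$, which is consistent. I expect no real obstacle. The only care needed is the modular indexing, and the observation that the degree-2 constraint already rules out $\pi(u_{3j+3}) = a$, so the labels along each block are fixed as $a, b, c, a$.
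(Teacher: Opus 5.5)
Your proposal is correct and follows the paper's proof: apply Observation~\ref{ob_SU3P_ppg} to the two degree-2 vertices $u_{3j+2},u_{3j+3}$ of each block and chain the equalities around the cycle. Your direct derivation from Observation~\ref{ob_SU3P_d2} is a label-general version of the paper's appendix proof of that observation, and the paper's extra remark on when the anchors themselves are legal is not needed for the stated claim.
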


We now use Forcing Cycles to generate constant-2 and constant-0 gadgets. In any legal 3-partition, the specific choice of labels is subject to global permutation symmetry. We exploit this to define our constant gadgets:
\begin{enumerate}
    \item Constant-2 Gadget: We designate the anchor vertices of the first Forcing Cycle as the source for label $2$. By Proposition~\ref{prop_SU3P_d3_cst}, all anchor vertices in this cycle are forced to share this same label, which we denote as constant-2 without loss of generality.
    \item Constant-0 Gadget: We take a second Forcing Cycle and connect its first anchor vertex $u_1'$ to a constant-2 anchor vertex from the first cycle. This connection forces $\pi(u_1') \neq 2$. By the remaining symmetry between labels $0$ and $1$, we can fix $\pi(u_1') = 0$. Consequently, all anchor vertices in the second cycle are forced to be constant-0.
\end{enumerate}
This modular approach provides a stable supply of constant-0 and 2 neighbours for the variable and clause gadgets, with each maintaining an amortized cost of 3 vertices.

\medskip
\noindent \textbf{Variable Gadget.} The variable gadget for $x$ is designed to represent truth assignments within the Boolean domain $\{0, 1\}$, while ensuring consistency and polarity between occurrences of $x$ and $\neg x$. As illustrated in Figure~\ref{fig_SU3P_d3_var}, the construction employs eight internal vertices $v_1, \dots, v_8$ and two constant-2 signals provided by anchor vertices $u, u'$. The vertices are categorized as follows:
\begin{itemize}
    \item Literal vertices: $v_1, v_4$ (representing $x$) and $v_5, v_8$ (representing $\neg x$). These vertices serve as interfaces to clause gadgets.
    \item Propagation paths: The pairs $(v_2, v_3)$ and $(v_6, v_7)$ are degree-2 vertices acting as synchronization buffers between literal occurrences.
    \item Domain restriction: Vertices $v_4$ and $v_5$ are connected to constant-2 anchors to exclude label $2$ from the truth-value assignments.
\end{itemize}
Each variable gadget utilizes 8 internal vertices and 6 amortized vertices from the constant-2 forcing cycles (2 signals $\times$ 3 vertices), totaling 14 vertices per variable.

\begin{figure}
    \centering
    \begin{subfigure}[b]{0.4\textwidth}
        \centering
    \resizebox{0.99\textwidth}{!}{
    \begin{tikzpicture}[
normal/.style = {draw, regular polygon,regular polygon sides=360, very thick},
fixed/.style={draw, regular polygon,regular polygon sides=4, very thick, minimum size=1.2cm},
balance/.style = {draw, regular polygon,regular polygon sides=360, fill=black},
]
\node[fixed] (u1) [label=above:$2$] {$u$};
\node[fixed] (u2) [right=1cm of u1, label=above:$2$] {$u'$};
\node[normal] (v4) [below=1cm of u1, label=below:$x$] {$v_4$};
\node[normal] (v5) [below=1cm of u2, label=below:$\lnot x$] {$v_5$};
\node[normal] (v3) [left=1cm of v4] {$v_3$};
\node[normal] (v2) [below=1cm of v3] {$v_2$};
\node[normal] (v1) [right=1cm of v2, label=below:$x$] {$v_1$};
\node[] (u3) [below left=1cm of v1] {to clause};
\node[] (u4) [below right=1cm of v1] {to clause};
\node[normal] (v6) [right=1cm of v5] {$v_6$};
\node[normal] (v7) [below=1cm of v6] {$v_7$};
\node[normal] (v8) [left=1cm of v7, label=below:$\lnot x$] {$v_8$};
\node[] (u5) [below right=1cm of v8] {to clause};

\draw (u1) -- (v4);
\draw (u2) -- (v5);
\draw (v3) -- (v4);
\draw (v3) -- (v2);
\draw (v2) -- (v1);
\draw (u3) -- (v1);
\draw (u4) -- (v1);
\draw (v6) -- (v5);
\draw (v6) -- (v7);
\draw (v8) -- (v7);
\draw (v8) -- (u5);
\draw (v4) -- (v5);

\end{tikzpicture}
}
    \caption{Labels $\pi(v)$ are indicated above the vertices.}
    \label{fig_SU3P_d3_var}
    \end{subfigure}
    \hfill
    \begin{subfigure}[b]{0.45\textwidth}
        \centering
        \resizebox{0.99\textwidth}{!}{
    \begin{tikzpicture}[
normal/.style = {draw, regular polygon,regular polygon sides=360, very thick},
fixed/.style={draw, regular polygon,regular polygon sides=4, very thick, minimum size=1.2cm},
balance/.style = {draw, regular polygon,regular polygon sides=360, fill=black},
]
\node[fixed] (u1) [label=above:$2$] {$u$};
\node[fixed] (u2) [right=1cm of u1, label=above:$2$] {$u'$};
\node[normal] (v1) [below=1cm of u1, label=above left:$c$] {$v_1$};
\node[normal] (v2) [below=1cm of u2, label=above right:$\lnot c$] {$v_2$};
\node[normal] (v3) [below left=1cm of v1, label=above:$2$] {$v_3$};
\node[] (v5) [below left=1cm of v3] {$x$};
\node[] (v6) [below right=1cm of v3] {$y$};
\node[normal] (v4) [below right=1cm of v2, label=above:$2$] {$v_4$};
\node[] (v7) [below left=1cm of v4] {$z$};
\node[fixed] (u3) [below right=1cm of v4, label=below:$0$] {$u''$};

\draw (u1) -- (v1);
\draw (u2) -- (v2);
\draw (v1) -- (v2);
\draw (v1) -- (v3);
\draw (v3) -- (v5);
\draw (v3) -- (v6);
\draw (v4) -- (v2);
\draw (v4) -- (v7);
\draw (v4) -- (u3);

\end{tikzpicture}
}
    \caption{Labels $\pi(v)$ are indicated above the vertices.}
    \label{fig_SU3P_d3_3cls}
    \end{subfigure}
    \caption{}
\end{figure}

\begin{proposition}
    \label{prop_SU3P_d3_var}
    A variable gadget is legally partitioned if and only if the literal vertices satisfy $\pi(v_1) = \pi(v_4)$, $\pi(v_5) = \pi(v_8)$, and $\{\pi(v_4), \pi(v_5)\} = \{0, 1\}$.
\end{proposition}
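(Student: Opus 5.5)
The plan is to prove the two directions separately. Only the legality of the internal vertices $v_2,\dots,v_7$ is needed for necessity. For sufficiency we exhibit an explicit labeling of the gadget's internal vertices. The legality of the interface vertices $v_1$ and $v_8$ depends also on their clause neighbours, so it is settled later at the level of the full reduction. From Figure~\ref{fig_SU3P_d3_var} we read off the edges
$uv_4$, $u'v_5$, $v_4v_5$, $v_4v_3$, $v_3v_2$, $v_2v_1$, $v_5v_6$, $v_6v_7$, $v_7v_8$.
Hence $v_4,v_5$ have degree three, $v_2,v_3,v_6,v_7$ have degree two, and $v_1,v_8$ have their remaining edges going to clause gadgets. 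By the constant-2 gadget we may take $\pi(u)=\pi(u')=2$.

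\emph{Necessity.} The graph has maximum degree three, so Observation~\ref{ob_SU3P_zero} says no vertex has a neighbour sharing its label.
\begin{itemize}
\item Since $v_4$ is adjacent to $u$ with $\pi(u)=2$, we get $\pi(v_4)\neq 2$. Likewise $\pi(v_5)\neq 2$.
\item Since $v_4v_5$ is an edge, $\pi(v_4)\neq\pi(v_5)$. Together these give $\{\pi(v_4),\pi(v_5)\}=\{0,1\}$.
\item The path $v_1,v_2,v_3,v_4$ runs through the two degree-2 vertices $v_2,v_3$, so Observation~\ref{ob_SU3P_ppg} yields $\pi(v_1)=\pi(v_4)$.
\item The path $v_5,v_6,v_7,v_8$ runs through $v_6,v_7$, so the same observation yields $\pi(v_5)=\pi(v_8)$.
\end{itemize}

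\emph{Sufficiency.} Let $a=\pi(v_4)=\pi(v_1)$ and $b=\pi(v_5)=\pi(v_8)$, with $\{a,b\}=\{0,1\}$. Set
\[
\pi(v_3)=\pi(v_6)=2,\qquad \pi(v_2)=b,\qquad \pi(v_7)=a .
\]
We then check each internal vertex against Observations~\ref{ob_SU3P_d2} and~\ref{ob_SU3P_d3}.
\begin{itemize}
\item $v_4$ (label $a$) has neighbours $u,v_3,v_5$ with labels $2,2,b$. This is a $(2,1)$ split over the two other labels, so $v_4$ is legal.
\item $v_5$ (label $b$) has neighbours $u',v_6,v_4$ with labels $2,2,a$, again a $(2,1)$ split.
\item $v_3$ (label $2$) sees $\{a,b\}$ via $v_4$ and $v_2$.
\item $v_2$ (label $b$) sees $\{2,a\}$ via $v_3$ and $v_1$.
\item $v_6$ (label $2$) sees $\{b,a\}$ via $v_5$ and $v_7$.
\item $v_7$ (label $a$) sees $\{2,b\}$ via $v_6$ and $v_8$.
\end{itemize}
Each degree-2 vertex sees exactly the two labels other than its own, so all internal vertices are legal. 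Moreover, the unique internal neighbour of $v_1$ (namely $v_2$) and of $v_8$ (namely $v_7$) carries a label different from theirs. Thus the gadget places no own-label neighbour on these interface vertices, which leaves their completion to the clause side.

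The main obstacle is not the case analysis, which is short, but fixing precisely what ``legally partitioned'' means for the boundary vertices $v_1$ and $v_8$. I would state explicitly that the proposition concerns the legality of the internal vertices together with the induced constraints on the literal vertices. I would also record the extra fact from the sufficiency step, that each of $v_1,v_8$ receives one neighbour of a non-own label from inside the gadget, because the clause-gadget argument will need it.
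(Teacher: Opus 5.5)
Your proof is correct and follows the paper's argument. Necessity applies Observation~\ref{ob_SU3P_ppg} to the paths $(v_2,v_3)$ and $(v_6,v_7)$ and Observation~\ref{ob_SU3P_zero} at $v_4,v_5$, exactly as the paper does, and your explicit labeling $\pi(v_3)=\pi(v_6)=2$, $\pi(v_2)=b$, $\pi(v_7)=a$ supplies the extension the paper only calls straightforward. One sharpening of the fact you plan to record: the clause-side neighbours of $v_1,v_8$ are forced to label $2$, so the full reduction needs specifically $\pi(v_2)=1-\pi(v_1)$ and $\pi(v_7)=1-\pi(v_8)$, not merely a non-own label. The alternative $\pi(v_2)=2$, $\pi(v_3)=b$ keeps all internal vertices legal but leaves $v_1$ with no neighbour of label $b$; your labeling is exactly the choice that works.
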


\medskip
\noindent \textbf{Clause Gadget.} The gadget for a 3-literal clause $(x \lor y \lor z)$ is designed to enforce satisfiability by leveraging Observation~\ref{ob_SU3P_zero}. As illustrated in Figure~\ref{fig_SU3P_d3_3cls}, the construction employs four internal vertices $v_1, \dots, v_4$ and three constant signals: two constant-2 anchors ($u, u'$) and one constant-0 anchor ($u''$). Each 3-literal clause introduces 4 internal vertices and 9 amortized vertices from the forcing cycles (3 signals $\times$ 3 vertices), totaling 13 vertices, and thus $13/3$ vertices per literal.



\begin{proposition} \label{prop_SU3P_d3_3cls}
    A clause gadget is legally partitioned if and only if at least one of the literal vertices $\{x, y, z\}$ is assigned the label $1$ (True).
\end{proposition}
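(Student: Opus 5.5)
The plan is a direct local case analysis of the four internal vertices, with each literal label fixed in $\{0,1\}$. Two facts are taken as given. First, the literal vertices $x,y,z$ carry labels in $\{0,1\}$, by Proposition~\ref{prop_SU3P_d3_var}. Second, $\pi(u)=\pi(u')=2$ and $\pi(u'')=0$, by Proposition~\ref{prop_SU3P_d3_cst} and the construction of the constant gadgets. Reading off Figure~\ref{fig_SU3P_d3_3cls}, the internal vertices all have degree three:
\[
N(v_1)=\{u,v_2,v_3\},\quad N(v_2)=\{u',v_1,v_4\},\quad N(v_3)=\{v_1,x,y\},\quad N(v_4)=\{v_2,z,u''\}.
\]
I will use Observations~\ref{ob_SU3P_zero} and~\ref{ob_SU3P_d3} throughout: a vertex has no neighbour sharing its label, and its three neighbours split $(2,1)$ or $(1,2)$ between the two other labels.

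\textbf{Step 1 (forced labels).}
\begin{itemize}
\item Since $v_1\sim u$ and $v_2\sim u'$, neither $v_1$ nor $v_2$ can have label $2$. Since $v_1\sim v_2$, we get $\{\pi(v_1),\pi(v_2)\}=\{0,1\}$; write $\pi(v_1)=c$ and $\pi(v_2)=1-c$.
\item Next I show $\pi(v_3)=2$. If instead $\pi(v_3)\in\{0,1\}$, then $v_3$ would need a neighbour labelled $2$. But $v_1,x,y$ all lie in $\{0,1\}$, so this is impossible.
\item Symmetrically, $\pi(v_4)=2$. Here $v_4\sim u''$ excludes label $0$. Label $1$ is excluded because $v_4$ would then need a neighbour labelled $2$, while $v_2,z,u''$ all lie in $\{0,1\}$.
\end{itemize}

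\textbf{Step 2 (residual constraints).} With these labels, $v_1$ sees labels $(2,1-c,2)$ and $v_2$ sees $(2,c,2)$. Both splits are legal, so these two vertices impose nothing further. The remaining conditions are:
\begin{itemize}
\item $v_3$ (label $2$) needs both labels $0$ and $1$ among $\{c,x,y\}$, i.e.\ these three values are not all equal;
\item $v_4$ (label $2$) needs label $1$ among $\{1-c,z,0\}$, i.e.\ $c=0$ or $z=1$.
\end{itemize}
So the gadget is legal if and only if some $c\in\{0,1\}$ satisfies both conditions.

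\textbf{Step 3 (equivalence with the clause).}
\begin{itemize}
\item Taking $c=0$, both conditions hold exactly when $x\lor y$ holds.
\item Taking $c=1$, both conditions hold exactly when $z=1$ and $\neg(x\land y)$.
\item The disjunction $(x\lor y)\lor(z\land\neg(x\land y))$ is equivalent to $x\lor y\lor z$. For the only nontrivial case, take $x=y=0$ and $z=1$: then $c=1$ works.
\end{itemize}
This gives both directions. For the ``if'' direction I exhibit the appropriate $c$ explicitly, together with $\pi(v_3)=\pi(v_4)=2$.

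\textbf{Interface check.} I must also confirm that the edges from the clause gadget into the variable gadgets never break legality on the variable side. The only clause vertices adjacent to literals are $v_3$ and $v_4$, and both are labelled $2$. So each literal vertex, whose label is in $\{0,1\}$, receives only label-$2$ neighbours from clause gadgets. This is exactly the situation that Proposition~\ref{prop_SU3P_d3_var} presupposes. For instance, $v_1$ in the variable gadget sees one internal neighbour with the other Boolean label and two neighbours labelled $2$.

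I expect this interface compatibility to be the main obstacle: it requires stating precisely what external labels the variable-gadget proposition assumes. The internal case analysis itself is routine.
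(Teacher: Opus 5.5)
Your proof is correct and follows the paper's route: force $\{\pi(v_1),\pi(v_2)\}=\{0,1\}$ and $\pi(v_3)=\pi(v_4)=2$, then split on $c=\pi(v_1)$. Your converse is more careful than the paper's ``readily verified'' remark. You observe that $v_3$ also needs a label-$0$ neighbour, so $c=1$ fails when $x=y=1$, and $c$ must therefore be chosen according to the assignment.
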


The gadget for a 2-literal clause $(x \lor z)$ is derived by replacing the literal $y$ in the 3-clause construction with an additional constant-0 anchor $u'''$ (shown in Appendix Figure~\ref{fig_SU3P_d3_2cls}). This modification has a vertex count for 2-clauses: each such clause requires 4 internal vertices and 12 amortized vertices from 4 constant signals ($u, u', u'', u'''$), totaling $16$ vertices, and thus $16/2=8$ vertices per literal. Compared with 3-clause gadget with $13/3$ vertices per literal, 2-clause gadget uses more vertices. To ensure the robustness of our ETH-based lower bound, we utilize the worst case (8 vertices per literal, as the more the vertices are used through the reduction, the less the ETH-based bound is) in our final calculation.

\medskip
\noindent \textbf{Full Reduction.} With the gadgets defined, we now summarize the reduction and analyze the resulting ETH-based lower bound.
\begin{proof}[Proof of Theorem~\ref{thm_SU3P}]
    Given an instance $\phi$ of $(3,3)$-SAT with $n$ variables and $m$ clauses, we construct the graph $G_\phi$ (an illustrative example shown in Figure~\ref{fig_SU3P_emp}) by assembling variable gadgets, clause gadgets, and the necessary constant neighbours as described earlier in the section. By Propositions~\ref{prop_SU3P_d3_cst}, ~\ref{prop_SU3P_d3_var}, and ~\ref{prop_SU3P_d3_3cls}, it follows that $G_\phi$ admits a strictly unfriendly 3-partition if and only if $\phi$ is satisfiable, giving SU3P is NP-hard.
    
    The total number of vertices $N$ in $G_\phi$ is bounded as follows:
    \begin{itemize}
        \item Variables: $n$ variable gadgets $\times$ 14 vertices/gadget = $14n$.
        \item Clauses: $l\leq3n$ literals $\times$ 8 vertices/literal (worst-case for 2-clauses) $\leq 24n$.
    \end{itemize}
    Thus, $N \leq 14n + 24n = 38n$. By the linear nature of the reduction, any algorithm solving SU3P in $2^{o(N)}$ time would imply a $2^{o(n)}$ time algorithm for $(3,3)$-SAT, contradicting ETH. Specifically:
     $$c_{(3,3)\text{-SAT}}^n \leq c_{\text{SU3P}}^{38n} \implies c_{\text{SU3P}} \geq c_{(3,3)\text{-SAT}}^{1/38}.$$
    This completes the proof.
\end{proof}

\begin{figure}[H]
    \centering
    \resizebox{0.99\textwidth}{!}{
    \begin{tikzpicture}[
    v_node/.style={circle, draw, minimum size=5mm, thick, inner sep=0pt, fill=white},
    u_node/.style={rectangle, draw, minimum size=5mm, thick, inner sep=0pt, font=\small, fill=white},
    filled_light/.style={fill=gray!40},
    filled_gray/.style={fill=gray},
    edge/.style={draw, thick},
    dense_dash/.style={draw, dashed, dash pattern=on 2pt off 2pt, semithick},
    extra_dense_dash/.style={draw, dashed, dash pattern=on 1.5pt off 1.5pt, semithick},
    lbl/.style={font=\small, fill=white, inner sep=1.2pt}
]

    \begin{scope}[xshift=8.5cm, yshift=0cm]
        \node[v_node] (d2-v1) {};
        \node[v_node, filled_light] (d2-v2) [right=0.7cm of d2-v1] {};
        \node[u_node, filled_gray] (d2-u1) [above=0.7cm of d2-v1] {};
        \node[u_node, filled_gray] (d2-u2) [above=0.7cm of d2-v2] {};
        \node[v_node, filled_gray] (d2-v3) [below left=0.6cm and 0.6cm of d2-v1] {};
        \node[v_node, filled_gray] (d2-v4) [below right=0.6cm and 0.6cm of d2-v2] {};
        \node[u_node] (d2-u3) [below right=0.6cm and 0.6cm of d2-v4] {};
        
        \draw[edge] (d2-v1) -- node[lbl, midway, below=0.5cm] {$(\neg x_1 \lor \neg x_2 \lor x_3)$} (d2-v2);
        \draw[edge] (d2-u1) -- (d2-v1);
        \draw[edge] (d2-u2) -- (d2-v2);
        \draw[edge] (d2-v1) -- (d2-v3);
        \draw[edge] (d2-v2) -- (d2-v4);
        \draw[edge] (d2-v4) -- (d2-u3);
    \end{scope}

    \begin{scope}[xshift=4.5cm, yshift=0cm]
        \node[v_node] (d1-v1) {};
        \node[v_node, filled_light] (d1-v2) [right=0.7cm of d1-v1] {};
        \node[u_node, filled_gray] (d1-u1) [above=0.7cm of d1-v1] {};
        \node[u_node, filled_gray] (d1-u2) [above=0.7cm of d1-v2] {};
        \node[v_node, filled_gray] (d1-v3) [below left=0.6cm and 0.6cm of d1-v1] {};
        \node[v_node, filled_gray] (d1-v4) [below right=0.6cm and 0.6cm of d1-v2] {};
        \node[u_node] (d1-u3) at (d1-v2 |- d2-u3) {};

        \draw[edge] (d1-v1) -- node[lbl, midway, below=0.5cm] {$(x_1 \lor x_2 \lor \neg x_3)$} (d1-v2);
        \draw[edge] (d1-u1) -- (d1-v1);
        \draw[edge] (d1-u2) -- (d1-v2);
        \draw[edge] (d1-v1) -- (d1-v3);
        \draw[edge] (d1-v2) -- (d1-v4);
        \draw[edge] (d1-v4) -- (d1-u3);
    \end{scope}

    
    \begin{scope}[xshift=1.0cm, yshift=-4.5cm]
        \node[v_node] (c1-v3) {};
        \node[v_node, filled_light] (c1-v4) [right=0.7cm of c1-v3, label={[lbl]below:$x_{1}$}] {};
        \node[v_node] (c1-v5) [right=0.7cm of c1-v4, label={[lbl]below:$\neg x_{1}$}] {};
        \node[v_node, filled_light] (c1-v6) [right=0.7cm of c1-v5] {};
        \node[u_node, filled_gray] (c1-u1) [above=0.7cm of c1-v4] {};
        \node[u_node, filled_gray] (c1-u2) [above=0.7cm of c1-v5] {};
        \node[v_node, filled_gray] (c1-v2) [below=0.7cm of c1-v3] {};
        \node[v_node, filled_gray] (c1-v7) [below=0.7cm of c1-v6] {};
        \node[v_node, filled_light] (c1-v1) [right=0.7cm of c1-v2, label={[lbl]below:$x_{1}$}] {};
        \node[v_node] (c1-v8) [left=0.7cm of c1-v7, label={[lbl]below:$\neg x_{1}$}] {};

        \draw[edge] (c1-v3)--(c1-v4)--(c1-v5)--(c1-v6) (c1-u1)--(c1-v4) (c1-u2)--(c1-v5) 
                   (c1-v3)--(c1-v2)--(c1-v1) (c1-v6)--(c1-v7)--(c1-v8);
    \end{scope}

    \begin{scope}[xshift=6.0cm, yshift=-4.5cm]
        \node[v_node, filled_light] (c2-v3) {};
        \node[v_node] (c2-v4) [right=0.7cm of c2-v3, label={[lbl]below:$x_{2}$}] {};
        \node[v_node, filled_light] (c2-v5) [right=0.7cm of c2-v4, label={[lbl]below:$\neg x_{2}$}] {};
        \node[v_node] (c2-v6) [right=0.7cm of c2-v5] {};
        \node[u_node, filled_gray] (c2-u1) [above=0.7cm of c2-v4] {};
        \node[u_node, filled_gray] (c2-u2) [above=0.7cm of c2-v5] {};
        \node[v_node, filled_gray] (c2-v2) [below=0.7cm of c2-v3] {};
        \node[v_node, filled_gray] (c2-v7) [below=0.7cm of c2-v6] {};
        \node[v_node] (c2-v1) [right=0.7cm of c2-v2, label={[lbl]below:$x_{2}$}] {};
        \node[v_node, filled_light] (c2-v8) [left=0.7cm of c2-v7, label={[lbl]below:$\neg x_{2}$}] {};

        \draw[edge] (c2-v3)--(c2-v4)--(c2-v5)--(c2-v6) (c2-u1)--(c2-v4) (c2-u2)--(c2-v5) 
                   (c2-v3)--(c2-v2)--(c2-v1) (c2-v6)--(c2-v7)--(c2-v8);
    \end{scope}

    \begin{scope}[xshift=11.0cm, yshift=-4.5cm]
        \node[v_node] (c3-v3) {};
        \node[v_node, filled_light] (c3-v4) [right=0.7cm of c3-v3, label={[lbl]below:$x_{3}$}] {};
        \node[v_node] (c3-v5) [right=0.7cm of c3-v4, label={[lbl]below:$\neg x_{3}$}] {};
        \node[v_node, filled_light] (c3-v6) [right=0.7cm of c3-v5] {};
        \node[u_node, filled_gray] (c3-u1) [above=0.7cm of c3-v4] {};
        \node[u_node, filled_gray] (c3-u2) [above=0.7cm of c3-v5] {};
        \node[v_node, filled_gray] (c3-v2) [below=0.7cm of c3-v3] {};
        \node[v_node, filled_gray] (c3-v7) [below=0.7cm of c3-v6] {};
        \node[v_node, filled_light] (c3-v1) [right=0.7cm of c3-v2, label={[lbl]below:$x_{3}$}] {};
        \node[v_node] (c3-v8) [left=0.7cm of c3-v7, label={[lbl]below:$\neg x_{3}$}] {};

        \draw[edge] (c3-v3)--(c3-v4)--(c3-v5)--(c3-v6) (c3-u1)--(c3-v4) (c3-u2)--(c3-v5) 
                   (c3-v3)--(c3-v2)--(c3-v1) (c3-v6)--(c3-v7)--(c3-v8);
    \end{scope}

    \begin{pgfonlayer}{background}
        \draw[edge] (d1-v3) -- (c1-v1);
        \draw[edge] (d1-v3) -- (c2-v1);
        \draw[edge] (d1-v4) -- (c3-v8);
        \draw[edge] (d2-v3) -- (c1-v8);
        \draw[edge] (d2-v3) -- (c2-v8);
        \draw[edge] (d2-v4) -- (c3-v1);

        \path [dense_dash] (c1-u1) to [bend right=15] (c1-u2)
                                 to [bend right=15] (c2-u1)
                                 to [bend right=15] (c2-u2)
                                 to [bend right=15] (c3-u1)
                                 to [bend right=15] (c3-u2) 
                                 to [bend right=25] node[lbl, pos=0.5, right=2pt] {constant-2 gadget} (d2-u2)
                                 to [bend right=15] (d2-u1)
                                 to [bend right=15] (d1-u2)
                                 to [bend right=15] (d1-u1)
                                 to [bend right=25] (c1-u1);
                                 
        \path [dense_dash] (d1-u3) to [bend left=25] (d2-u3)
                                   to [bend left=25] node[lbl, midway, above=5pt, fill=white, inner sep=2pt] {constant-0 gadget} (d1-u3);
    \end{pgfonlayer}

\end{tikzpicture}
}
    \caption{A legal SU3P on $G_\phi$, where $\phi=(x_1\lor x_2\lor\lnot x_3)\land(\lnot x_1\lor \lnot x_2\lor x_3)$ and the truth assignment $(x_1,x_2,x_3)=(1,0,1)$. Dark vertices have label 2; grey vertices have label 1; blank vertices have label 0.}
    \label{fig_SU3P_emp}
\end{figure}

\section{General Hardness for $k \geq 2$}
\label{sec_general_hardness}

The modular reduction framework generalizes to establish the $\mathbf{NP}$-hardness and ETH-hardness of $\text{SU}k\text{P}$ for any fixed $k \geq 2$ on general graphs.

\begin{theorem} \label{thm_SUkP_unrestricted}
For any fixed $k \geq 2$, $\text{SU}k\text{P}$ on general graphs is $\mathbf{NP}$-hard~\cite{kun2013anti}. Assuming ETH, $c_{\text{SU}k\text{P}} \geq c_{(3,3)\text{-SAT}}^{2/7}$.
\end{theorem}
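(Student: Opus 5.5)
The plan is a direct reduction from $(3,3)$-SAT that produces a graph with at most $2n+m+O_k(1)$ vertices. By Lemma~\ref{lem_nm}, $m\le \frac32 n$, so this is at most $\frac72 n+O_k(1)$ vertices. As in the proofs of Theorems~\ref{thm_su2p} and~\ref{thm_SU3P}, an $O(d^{N})$ algorithm for $\text{SU}k\text{P}$ then yields an $O(d^{7n/2})$ algorithm for $(3,3)$-SAT, giving $c_{\text{SU}k\text{P}}\ge c_{(3,3)\text{-SAT}}^{2/7}$. NP-hardness comes along for free, though it is already known from~\cite{kun2013anti}.

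The key point is that on general graphs degrees are unrestricted, so constants no longer cost amortized vertices per use, as they did in Sections~\ref{sec_su2p} and~\ref{sec_su3p}. Instead I would build a single \emph{constant gadget} $K$ with $O_k(1)$ vertices. It contains, for each label $j\in L$, a group of anchor vertices forced (up to global permutation of labels) to carry label $j$; a clique-based structure around $K_k$ is the natural starting point. Every other vertex may then be joined to an arbitrary number of anchors of each label. Since $G$ is simple, the multiplicity of a label $j$ in a vertex's neighbourhood is bounded by the number of $j$-anchors, so I would give each label $\Theta(k)$ anchors, which is still $O_k(1)$.

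The formula vertices are as follows.
\begin{itemize}
\item For each variable $x$, there are two literal vertices $x$ and $\neg x$.
\item For each clause $C$, there is one clause vertex $c_C$, adjacent to its literal vertices.
\end{itemize}
Each non-constant vertex is joined to many anchors of every label $j\ge 2$. The counts are chosen so that a label $\ge 2$ would make the vertex illegal, while labels $\ge 2$ never become the bottleneck for the remaining choices. This confines all formula vertices to $\{0,1\}$, with $1$ read as True.

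The literal vertices $x,\neg x$ are made adjacent and receive anchor neighbours so that they must take opposite labels, and so that their legality holds regardless of the labels of their at most $3$ clause neighbours. I would arrange this with a margin of at least $4$ in the relevant label differences, so clause neighbours cannot flip the outcome. The clause vertex $c_C$ receives a calibrated number of $0$- and $1$-anchors, so that some legal label for $c_C$ exists exactly when at least one adjacent literal vertex has label $1$. I would check this separately for 2- and 3-literal clauses, and expect the $k=2$ case to need slightly different calibration than $k\ge 3$.

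The main obstacle I expect is the legality of the anchors themselves. Each anchor can acquire up to $\Theta(n)$ neighbours from formula vertices, whose labels in $\{0,1\}$ depend on the assignment, and these could outweigh the anchor's own strict-unfriendliness inside $K$. My first attempt is to attach formula vertices to anchors in complementary pairs. Every variable pair $(x,\neg x)$ always contributes exactly one $0$-neighbour and one $1$-neighbour. For the clause vertices, I would add a dummy partner vertex $\bar c_C$ forced opposite to $c_C$, at the cost of rechecking the vertex count. Then each anchor of label $j\ge 2$ sees equal numbers of labels $0$ and $1$ from outside $K$, and it suffices that inside $K$ it sees strictly more of every other label than its own. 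For anchors of labels $0$ and $1$, the uncontrolled neighbours mostly carry the other label, which only helps, and I would add compensating edges to $j$-anchors for $j\ge 2$. Their count is bounded independently of $n$ because the constant gadget's degrees may grow.

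If the pairing for clause vertices pushes the count above $\frac72 n$, I would instead use each literal vertex's opposite twin as the balancing partner. Then only anchors of labels $\ge 2$ need to be adjacent to clause vertices, and I would attempt the calibration of $c_C$ using only variable-side structure plus those anchors.
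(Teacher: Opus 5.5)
Your vertex accounting ($2n+m+O_k(1)$, with one constant gadget of unbounded degree) is exactly the paper's. For $k=2$ your plan can be completed essentially as the paper does it: a clause vertex with three (resp.\ two) constant-$1$ neighbours is forced to label $0$ and is legal iff some literal neighbour has label $1$, and each literal vertex gets as many constant-$1$ neighbours as it has occurrences.

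For $k\ge 3$, however, confining the \emph{clause} vertices to $\{0,1\}$ is fatal. First consider your fallback, where $c_C$ has no $0$- or $1$-anchors. Its legality is then invariant under swapping labels $0$ and $1$ on $c_C$ and its literal neighbours. So the set of values $t$ (number of true literals) admitting a legal label is symmetric under $t\mapsto |C|-t$, and excluding $t=0$ would also exclude $t=|C|$. Hence $c_C$ needs asymmetric $0$/$1$-anchors. But then label $1$ at $c_C$ only gets harder as $t$ grows, so its legal set is downward closed and, since $t=0$ must be illegal, empty. Every clause vertex is therefore forced to label $0$. Next, no ``margin'' is possible at a literal vertex. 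Let $D_0,D_1$ count the neighbours of $x$ other than $\neg x$ with labels $0,1$. Then $x=0,\neg x=1$ is legal at $x$ only if $D_0\le D_1$, and $x=1,\neg x=0$ only if $D_1\le D_0$, so you need $D_0=D_1$ exactly. Since the $o_x\ge 1$ clause neighbours of $x$ all have label $0$, $x$ needs at least $o_x$ neighbours of fixed label $1$, i.e.\ $1$-anchors.

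This is where the plan breaks. Under any assignment exactly $n$ literal vertices have label $1$, and each of them has a $1$-anchor neighbour. So some $1$-anchor $u$ has $d_1(u)\ge n/|K|$; with your pairing, it gets one own-label neighbour per attached pair. Meanwhile $d_2(u)\le |K|$, because no vertex outside $K$ has label $2$. For $n>|K|^2$ this $u$ is illegal whether or not $\phi$ is satisfiable. Compensating edges inside an $O_k(1)$-size $K$ cannot help, and the dummy partners $\bar c_C$ neither address this nor fit the $\frac72 n$ budget.

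The missing idea is to let clause vertices use a label outside $\{0,1\}$, as the paper does with a single $K_k$ labelled $0,\dots,k-1$. Join $c_C$ to its literal vertices and to one anchor of each label in $\{0,3,\dots,k-1\}$. It then has no neighbour of label $2$, so any label $a\ne 2$ would require $d_a(c_C)<d_2(c_C)=0$. Hence $\pi(c_C)=2$, and $c_C$ is legal iff some literal neighbour has label $1$. Besides their clause vertices, literal vertices are joined only to their twin and to one anchor of each label $2,\dots,k-1$. Their clause neighbours carry label $2$ and leave the $0$/$1$ balance untouched, so the twin edge alone forces $\{\pi(x),\pi(\neg x)\}=\{0,1\}$. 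Every anchor then sees, outside $K_k$, only labels different from its own. The anchor-legality problem you spend most of your effort on therefore disappears, and $N=2n+m+k$.
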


\medskip
\noindent \textbf{Technical Divergence and Complexity Invariance.}
Despite technical differences between parity-constrained $\text{SU}2\text{P}$ gadgets and the flexible multi-partition framework for $k \geq 3$, both reductions converge to an identical lower bound constant of $c_{(3,3)\text{-SAT}}^{2/7}$. This demonstrates that the core exponential complexity on unconstrained graphs remains invariant across all $k \ge 2$.

\medskip
\noindent \textbf{The Efficiency-Degree Trade-off.}
Enforcing subcubic constraints ($\Delta \le 3$) requires elaborate synchronization gadgets, inflating instance sizes to $N \le 13n$ ($k=2$) and $N \le 38n$ ($k=3$). Relaxing degree constraints enables compact uniform gadgets with $N = 3.5n$ for all $k \geq 2$, yielding the tighter lower bound of $c_{(3,3)\text{-SAT}}^{2/7}$. This formally quantifies the efficiency loss incurred by graph sparsity.

\section{Conclusion}
\label{sec_con}

In this paper, we provided a comprehensive complexity analysis of the Strictly Unfriendly $k$-Partition problem. In summary, while the $2^{\Omega(n)}$ exponential lower bound of $\text{SU}k\text{P}$ is robust across $k$, the specific exponential constant is highly sensitive to structural constraints, particularly the maximum degree. Future research may determine if the $2/7$ bound is optimal for general graphs or if further gadget optimization can refine this constant. Our main results are summarized as follows:

\begin{itemize}
    \item \textbf{Tractability Thresholds:} It is easy to see that $\Delta \geq k-1$ as a necessary condition for $\text{SU}k\text{P}$ existence, rendering the problem trivial when this bound is not met. For the pivotal cases of $k \in \{2, 3\}$, we pinpointed a sharp complexity transition at $\Delta = 3$: while these problems remain efficiently solvable for $\Delta \leq 2$, they become $\mathbf{NP}$-hard and ETH-hard on subcubic graphs.
    
    \item \textbf{Complexity Convergence:} Our results reveal a striking \emph{complexity convergence} for $\text{SU}k\text{P}$ on general graphs. Despite the notable technical divergence in gadget constructions for bipartitions versus multi-partitions, both reductions converge to an identical lower bound constant of $c_{(3,3)\text{-SAT}}^{2/7}$. This suggests that the fundamental complexity essence of the strictly unfriendly condition remains invariant across all $k \geq 2$ when degree constraints are relaxed.
    
    \item \textbf{The Cost of Sparsity:} We quantified the inherent trade-off between graph sparsity and reduction efficiency. Our results demonstrate that restricting gadgets to subcubic graphs necessitates a significantly larger reduction size, which is reflected in the sharp decrease of the exponents—from $2/7$ on general graphs to $1/13$ for $\text{SU}2\text{P}$ and $1/38$ for $\text{SU}3\text{P}$ on subcubic instances.
\end{itemize}

\bibliography{TAMC_full/ref}

\section{Appendix: Omitted Proofs for SUP}
\label{sec_app}

\subsection{Omitted Figure}

\begin{figure}[H]
    \centering
    \resizebox{0.4\textwidth}{!}{
    \begin{tikzpicture}[
normal/.style = {draw, regular polygon,regular polygon sides=360, very thick},
fixed/.style={draw, regular polygon,regular polygon sides=4, very thick},
balance/.style = {draw, regular polygon,regular polygon sides=360, fill=black},
]
\node[normal] (v1) [label=above:$0$] {$v_1$};
\node[normal] (v2) [right=0.5cm of v1, label=above:$1$] {$v_2$};
\node[normal] (v3) [right=0.5cm of v2, label=above:$0$] {$v_3$};
\node[fixed] (u1) [right=0.5cm of v3, label=above:$1$] {$u_1$};
\node[] (v4) [below left=1cm of v1] {$x$};
\node[] (v5) [below right=1cm of v1] {$y$};

\draw (v3) -- (u1);
\draw (v3) -- (v2);
\draw (v1) -- (v2);
\draw (v1) -- (v4);
\draw (v1) -- (v5);

\end{tikzpicture}
}
    \caption{Labels $\pi(v)$ are indicated above the vertices.}
    \label{fig_SU2P_d3_2cls}
\end{figure}

\begin{figure}[H]
    \centering
    \begin{subfigure}[b]{0.49\textwidth}
        \centering
    \resizebox{0.66\textwidth}{!}{
    \begin{tikzpicture}[
        normal/.style = {draw, regular polygon,regular polygon sides=360, very thick},
        fixed/.style={draw, regular polygon,regular polygon sides=4, very thick, minimum size=2.4cm},
        balance/.style = {draw, regular polygon,regular polygon sides=360, fill=black},
    ]
        \node[fixed] (u1) {$u_1$};
        \node[fixed] (u2) [below=1cm of u1] {$u_2$};
        \node[fixed] (u3) [below=1cm of u2] {$u_3$};
        \node[fixed] (u4) [left=1cm of u3] {$u_4$};
        \node[fixed] (u2k-1) [above=1cm of u4] {$u_{2k-1}$};
        \node[fixed] (u2k) [above=1cm of u2k-1] {$u_{2k}$};
        \node[] (e1) [above=1cm of u1] {};
        \node[] (e3) [below=1cm of u3] {};
        \node[] (e2k-1) [left=1cm of u2k-1] {};

        \draw (u1) -- (u2);
        \draw (u3) -- (u2);
        \draw (u3) -- (u4);
        \draw [very thick, dotted] (u2k-1) -- (u4);
        \draw (u2k-1) -- (u2k);
        \draw (u1) -- (u2k);
        \draw (u1) -- (e1);
        \draw (u3) -- (e3);
        \draw (u2k-1) -- (e2k-1);

    \end{tikzpicture}
    }
    \caption{}
    \label{fig_SU2P_d3_cst}
    \end{subfigure}
    \hfill
    \begin{subfigure}[b]{0.49\textwidth}
        \resizebox{0.66\textwidth}{!}{
    \begin{tikzpicture}[
        normal/.style = {draw, regular polygon,regular polygon sides=360, very thick},
        fixed/.style={draw, regular polygon,regular polygon sides=4, very thick, minimum size=2.4cm},
        balance/.style = {draw, regular polygon,regular polygon sides=360, fill=black},
    ]
        \node[fixed] (u1) {$u_1$};
        \node[fixed] (u2) [below=1cm of u1] {$u_2$};
        \node[fixed] (u3) [below=1cm of u2] {$u_3$};
        \node[fixed] (u4) [below=1cm of u3] {$u_4$};
        \node[fixed] (u5) [left=1cm of u4] {$u_5$};
        \node[fixed] (u3k-2) [above=1cm of u5] {$u_{3k-2}$};
        \node[fixed] (u3k-1) [above=1cm of u3k-2] {$u_{3k-1}$};
        \node[fixed] (u3k) [above=1cm of u3k-1] {$u_{3k}$};
        \node[] (e1) [above=1cm of u1] {};
        \node[] (e4) [below=1cm of u4] {};
        \node[] (e3k-2) [left=1cm of u3k-2] {};

        \draw (u1) -- (u2);
        \draw (u3) -- (u2);
        \draw (u3) -- (u4);
        \draw (u5) -- (u4);
        \draw [dotted, very thick] (u5) -- (u3k-2);
        \draw (u3k-1) -- (u3k-2);
        \draw (u3k-1) -- (u3k);
        \draw (u1) -- (u3k);
        \draw (u1) -- (e1);
        \draw (u4) -- (e4);
        \draw (u3k-2) -- (e3k-2);

    \end{tikzpicture}
    }
    \caption{}
    \label{fig_SU3P_d3_cst2}
    \end{subfigure}
    \caption{}
    \label{fig:placeholder}
\end{figure}

\begin{figure}[H]
    \centering
    \resizebox{0.45\textwidth}{!}{
    \begin{tikzpicture}[
normal/.style = {draw, regular polygon,regular polygon sides=360, very thick},
fixed/.style={draw, regular polygon,regular polygon sides=4, very thick},
balance/.style = {draw, regular polygon,regular polygon sides=360, fill=black},
]
\node[fixed] (u1) [label=above:$2$] {$u$};
\node[fixed] (u2) [right=1cm of u1, label=above:$2$] {$u'$};
\node[normal] (v1) [below=1cm of u1, label=left:$c$] {$v_1$};
\node[normal] (v2) [below=1cm of u2, label=right:$\lnot c$] {$v_2$};
\node[normal] (v3) [below left=1cm of v1, label=above:$2$] {$v_3$};
\node[] (v5) [below left=1cm of v3] {$x$};
\node[fixed] (u3) [below right=1cm of v3, label=above:$0$] {$u''$};
\node[normal] (v4) [below right=1cm of v2, label=above:$2$] {$v_4$};
\node[] (v6) [below left=1cm of v4] {$y$};
\node[fixed] (u4) [below right=1cm of v4, label=above:$0$] {$u'''$};

\draw (u1) -- (v1);
\draw (u2) -- (v2);
\draw (v1) -- (v2);
\draw (v1) -- (v3);
\draw (v3) -- (v5);
\draw (v3) -- (u3);
\draw (v4) -- (v2);
\draw (v4) -- (v6);
\draw (v4) -- (u4);

\end{tikzpicture}
}
    \caption{Labels $\pi(v)$ are indicated above the vertices.}
    \label{fig_SU3P_d3_2cls}
\end{figure}

\subsection{Omitted Proofs of SU2P} \label{sec_app_prf2}

\begin{lemma}[Impagliazzo, Paturi and Zane \cite{10.5555/795664.796440}]
    \label{lem_paturi}
    There exists a constant $d \in \mathbb{N}^+$ such that $c_{3\text{-SAT}} = c_{(d, 3)\text{-SAT}}$.
\end{lemma}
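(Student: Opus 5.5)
The statement to prove is Lemma~\ref{lem_paturi}: there is a constant $d$ with $c_{3\text{-SAT}} = c_{(d,3)\text{-SAT}}$. This is the Sparsification Lemma of Impagliazzo, Paturi and Zane, and I would derive it from that lemma rather than reprove sparsification.

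\textbf{Easy direction.} Every $(d,3)$-SAT instance is a $3$-SAT instance, so $c_{(d,3)\text{-SAT}} \le c_{3\text{-SAT}}$ holds for every $d$.

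\textbf{Sparsification step.} For the reverse inequality, fix $\varepsilon>0$. The Sparsification Lemma gives an algorithm that maps a 3-CNF $\phi$ on $n$ variables to formulas $\psi_1,\dots,\psi_t$ with the following properties:
\begin{itemize}
  \item $t \le 2^{\varepsilon n}$;
  \item the running time is $2^{\varepsilon n}\mathrm{poly}(n)$;
  \item $\phi$ is satisfiable if and only if some $\psi_i$ is satisfiable;
  \item each $\psi_i$ is a 3-CNF on at most $n$ variables in which every variable occurs in at most $C(\varepsilon)$ clauses.
\end{itemize}

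\textbf{Making the occurrence bound uniform in $\varepsilon$.} The bound $C(\varepsilon)$ depends on $\varepsilon$, but the lemma requires a single $d$. I would handle this by the standard occurrence-reduction trick, applied to each $\psi_i$:
\begin{itemize}
  \item Replace each variable $x$ with $r \le C(\varepsilon)$ occurrences by fresh copies $x_1,\dots,x_r$, one per occurrence.
  \item Chain the copies with the implication cycle $(\neg x_1\lor x_2),\dots,(\neg x_r\lor x_1)$.
  \item Each copy then occurs 3 times, so the result is a $(3,3)$-SAT formula, and in particular $d=3$ works.
\end{itemize}
The number of variables grows only to $3\cdot|\text{clauses}| \le C(\varepsilon)n$. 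That is still linear, but with a constant that blows up as $\varepsilon\to0$, which would destroy the exact equality of the constants.

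The cleaner route keeps the same formulas and instead fixes $d := C(\varepsilon_0)$ for one suitable $\varepsilon_0$. I would then argue through the infimum:
\begin{itemize}
  \item If $(d,3)$-SAT is solvable in $O(c^n)$, then 3-SAT is solvable in $O(2^{\varepsilon_0 n}c^n)$.
  \item This gives $c_{3\text{-SAT}} \le 2^{\varepsilon_0}c_{(d,3)\text{-SAT}}$.
\end{itemize}

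\textbf{Main obstacle.} Exact equality would require letting $\varepsilon_0\to0$, which is incompatible with a fixed $d$. I expect this to be the hard step. The way I would resolve it is to note that the paper uses the lemma only as in Lemma~\ref{lem_33}, where equality up to the $\varepsilon$-slack suffices. If exact equality is genuinely needed, the fallback is a monotonicity argument, which I have not verified:
\begin{itemize}
  \item $c_{(d,3)\text{-SAT}}$ is nondecreasing in $d$ and bounded by $c_{3\text{-SAT}}$.
  \item The sparsification bound gives $c_{3\text{-SAT}} \le 2^{\varepsilon}\lim_d c_{(d,3)\text{-SAT}}$ for every $\varepsilon>0$, hence $c_{3\text{-SAT}} = \sup_d c_{(d,3)\text{-SAT}}$.
  \item Attaining the supremum at a finite $d$ still requires a compactness-type observation. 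The standard move is to cite IPZ, who state it as "3-SAT has subexponential algorithms iff $(d,3)$-SAT does for some $d$."
\end{itemize}
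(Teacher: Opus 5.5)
The paper gives no argument here; it only cites Impagliazzo, Paturi and Zane. Your argument up to the supremum is correct. The inequality $c_{(d,3)\text{-SAT}}\le c_{3\text{-SAT}}$ and monotonicity in $d$ are immediate. Sparsification gives $c_{3\text{-SAT}}\le 2^{\varepsilon}c_{(C(\varepsilon),3)\text{-SAT}}$, hence $c_{3\text{-SAT}}=\sup_d c_{(d,3)\text{-SAT}}$.

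The step you flagged, attaining this supremum at one finite $d$, is a genuine gap, and neither of your patches closes it. There is no compactness to appeal to: a nondecreasing bounded sequence can increase strictly forever without reaching its supremum. Your two tools also trade off rather than combine. Sparsification keeps $n$ variables but needs $d=C(\varepsilon)\to\infty$. Occurrence splitting fixes $d$ but inflates the variable count by a factor up to $C(\varepsilon)$. So neither yields $c_{3\text{-SAT}}\le c_{(d,3)\text{-SAT}}$ for a fixed $d$. Citing the IPZ equivalence does not help either. It concerns solvability in time $2^{o(n)}$, i.e.\ only whether these constants equal $1$. Under ETH both sides of that equivalence are false, so it says nothing about equality of two constants exceeding $1$.

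What you have proved is the limit form, which is what sparsification delivers. The exact equality at a fixed $d$ asserted in the lemma is stronger than that and does not follow from the cited result. The fix is to weaken the statement rather than search for the missing step: for every $\varepsilon>0$ there is $d$ with $c_{(d,3)\text{-SAT}}\le c_{3\text{-SAT}}\le 2^{\varepsilon}c_{(d,3)\text{-SAT}}$.

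You are right that this suffices for the only use, Lemma~\ref{lem_33}, but make it explicit, since ETH is needed to absorb the slack:
\begin{enumerate}
\item With $c_{3\text{-SAT}}>1$, pick $\varepsilon$ with $2^{\varepsilon}\le c_{3\text{-SAT}}^{1/2}$, so that $c_{(d,3)\text{-SAT}}\ge c_{3\text{-SAT}}^{1/2}$.
\item The occurrence-splitting reduction from the proof of Lemma~\ref{lem_33} produces at most $(d+1)n$ variables, so $c_{(3,3)\text{-SAT}}\ge c_{(d,3)\text{-SAT}}^{1/(d+1)}\ge c_{3\text{-SAT}}^{1/(2d+2)}$.
\end{enumerate}
This is exactly Lemma~\ref{lem_33} with $d$ replaced by $2d+1$.
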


\begin{proof}[Proof of Lemma~\ref{lem_33}]
    Let $d$ be the constant such that $c_{3\text{-SAT}} = c_{(d,3)\text{-SAT}}$ from Lemma~\ref{lem_paturi}. Given a $d$-occur-$3$-SAT instance $\phi$, for each variable $x_i$ appearing $j_i \leq d$ times, we introduce $j_i$ new variables $y_{i,1}, \dots, y_{i,j_i}$ to replace literals and add the implication cycle $(\neg x_i \lor y_{i,1}) \land (\neg y_{i,1} \lor y_{i,2}) \land \dots \land (\neg y_{i,j_i} \lor x_i)$.
    
    This ensures $x_i = y_{i,1} = \dots = y_{i,j_i}$ and that each variable occurs at most thrice in $\phi'$. The variable count $n'$ of $\phi'$ satisfies $n' = \sum_{i=1}^n (j_i + 1) \leq (d+1)n$. Under ETH, solving $\phi$ requires $\Omega(c_{(d,3)\text{-SAT}}^n) = \Omega(c_{3\text{-SAT}}^n)$ time. If $(3,3)\text{-SAT}$ is solvable in $O(c_{(3,3)\text{-SAT}}^{n'})$ time, then:
    $$c_{3\text{-SAT}}^n \leq c_{(3,3)\text{-SAT}}^{(d+1)n} \implies c_{3\text{-SAT}}^{1/(d+1)} \leq c_{(3,3)\text{-SAT}} $$
\end{proof}

\begin{proof}[Proof of Observation~\ref{ob_SU2P_d2}]
    If $d(u)=2$ and $\pi(u)=a$, the condition $d_a(u) < d_b(u)$ implies $d_a(u)=0$ and $d_b(u)=2$. Thus, the neighbourhood of $u$ is uniquely determined once $\pi(u)$ is fixed.
\end{proof}

\begin{proof}[Proof of Observation~\ref{ob_SU2P_d3}]
    If $d(u)=3$ and $\pi(u)=a$, the condition $d_a(u) < d_b(u)$ implies $d_a(u) \in \{0, 1\}$. This allows for two possible local configurations: either all three neighbours have the opposite label, or two neighbours have the opposite label and one has the same label.
\end{proof}

\begin{proof}[Proof of Proposition ~\ref{prop_SU2P_d3_cst}]
    ($\Rightarrow$) Suppose the constant gadget admits a legal strictly unfriendly 2-partition $\pi$. Consider a degree-2 vertex $u_{2j}$. By Observation~\ref{ob_SU2P_d2}, its neighbours $u_{2j-1}$ and $u_{2j+1}$ must be assigned a label different from $\pi(u_{2j})$. This local constraint propagates through the even cycle, forcing all odd-indexed vertices to share the same label $a \in \{0,1\}$, and all even-indexed vertices to share the label $b = 1-a$. For an odd-indexed vertex $u_{2j+1}$, let its cycle neighbours be $u_{2j}$ and $u_{2j+2}$. Since $\pi(u_{2j}) = \pi(u_{2j+2}) = b$ and $\pi(u_{2j+1}) = a$, the vertex $u_{2j+1}$ already has two neighbours having the opposite label. According to Observation~\ref{ob_SU2P_d3}, a degree-3 vertex is legal if it has at least two neighbours having the opposite label. Therefore, the label of its third (external) neighbour does not affect the legality of $u_{2j+1}$. This ensures that the constant gadget provides stable supply of constant labels for variable and clause gadgets.

    $(\Leftarrow)$ Conversely, the legality of all vertices under the alternating label assignment follows directly from Observations~\ref{ob_SU2P_d2} and ~\ref{ob_SU2P_d3}.
\end{proof}

\begin{proof}[Proof of Proposition ~\ref{prop_SU2P_d3_var}]
    ($\Rightarrow$) Suppose the gadget admits a legal strictly unfriendly 2-partition $\pi$. Consider the degree-2 auxiliary vertex $v_4$. By Observation~\ref{ob_SU2P_d2}, its neighbours $v_1$ and $v_2$ must have the opposite label of $v_4$. This immediately forces $\pi(v_1) = \pi(v_2)$, ensuring that both occurrences of the literal $x$ receive the same truth value.
    Now consider the degree-3 vertex $v_3$. Let $\pi(v_1) = \pi(v_2) = a$. If $v_3$ were assigned the same label $a$, it would have at least two neighbours ($v_1$ and $v_2$) having its own label, violating Observation~\ref{ob_SU2P_d3}. Thus, we must have $\pi(v_3) = 1-a$, satisfying the polarity requirement between $x$ and $\neg x$.

    $(\Leftarrow)$ Conversely, the legality of the gadget under the specified labels follows immediately from Observations~\ref{ob_SU2P_d2} and ~\ref{ob_SU2P_d3}.
\end{proof}

\begin{proof} [Proof of Proposition ~\ref{prop_SU2P_d3_3cls}]
    $(\Rightarrow)$ Suppose the gadget admits a legal strictly unfriendly 2-partition $\pi$. Recall that $\pi(u) = \pi(u') = 1$. Iteratively applying Observation~\ref{ob_SU2P_d2} on degree-2 vertices from $v_3$ to $v_2$, we immediately obtain the forced labels: $\pi(v_3) = 0, \pi(v_2) = 1$, and consequently $\pi(v_1) = 0$. For $v_1$ to be legal with $\pi(v_1)=0$, Observation~\ref{ob_SU2P_d3} requires that it has at most one neighbour with label $0$. Since its neighbour $v_2$ is already labeled $1$, at least one of its remaining neighbours $z$ or $v_4$ must be labeled $1$. We consider two cases for the literal $z$:
    \begin{itemize}
        \item If $\pi(z) = 1$, the clause is satisfied regardless of $\pi(x)$ and $\pi(y)$.
        \item If $\pi(z) = 0$, then we must have $\pi(v_4) = 1$. By Observation~\ref{ob_SU2P_d3}, for the degree-3 vertex $v_4$ (adjacent to $v_1, v_5, u$), since $\pi(v_1)=0$ and $\pi(u)=1$, $v_4$ is legal only if $\pi(v_5) = 0$. For $v_5$ (with label $0$) to be legal, it must have at most one neighbour with label $0$. This forces at least one of its literal neighbours $x$ or $y$ to be $1$.
    \end{itemize}
    In both cases, at least one of $\{x, y, z\}$ is labeled $1$, satisfying the clause.
    
    $(\Leftarrow)$ Conversely, if the clause $(x \lor y \lor z)$ is satisfied, it is straightforward to find a label assignment for the internal vertices $\{v_1, \dots, v_5\}$ that satisfies Observations~\ref{ob_SU2P_d2} and~\ref{ob_SU2P_d3}, thereby extending the truth assignment to a legal partition of the gadget.
\end{proof}

\subsection{Omitted Proofs of SU3P} \label{sec_app_prf3}

\begin{proof}[Proof of Observation~\ref{ob_SU3P_zero}]
    If $d_{\pi(v)}(v) \geq 1$, the strictly unfriendly condition requires $d_j(v) \geq 2$ for each $j \neq \pi(v)$, implying the contradiction $d(v) = \sum_{i=0}^2 d_i(v) \geq 1 + 2 + 2 = 5 > \Delta$.
\end{proof}

\begin{proof}[Proof of Observation~\ref{ob_SU3P_d2}]
    If $d(u)=2$ and $\pi(u)=2$, then the neighbours $v, w$ must satisfy $\pi(v)=0$ and $\pi(w)=1$ (or vice versa) to ensure $d_2(u)=0 < d_0(u)=1$ and $d_2(u)=0 < d_1(u)=1$.
\end{proof}

\begin{proof}[Proof of Observation~\ref{ob_SU3P_d3}]
    For a vertex $u$ with $\pi(u)=2$ and neighbours $\{v, w, p\}$, the neighbour labels must form the multiset $\{0, 0, 1\}$ or $\{0, 1, 1\}$ up to permutation.
\end{proof}

\begin{proof}[Proof of Observation~\ref{ob_SU3P_ppg}]
    Suppose $\pi(v_1)=2$. By Observation~\ref{ob_SU3P_d2}, the neighbours of $w_1$ (which are $v_1$ and $w_2$) must have distinct labels from $\pi(w_1)$. If we set $\pi(w_1)=0$, then $\pi(w_2)=1$. Applying the same observation to $w_2$, its neighbours $w_1$ and $v_2$ must have distinct labels from $\pi(w_2)=1$. Since $\pi(w_1)=0$, we are forced to have $\pi(v_2)=2$.
\end{proof}

\begin{proof} [Proof of Proposition ~\ref{prop_SU3P_d3_cst}]
    Consider any two consecutive anchor vertices $u_{3j+1}$ and $u_{3j+4}$. They are connected by a path of two degree-2 vertices, $(u_{3j+2}, u_{3j+3})$. By Observation~\ref{ob_SU3P_ppg}, any two vertices connected by such a path must be assigned the same label in a legal 3-partition. By induction, all anchor vertices in the cycle must share a single label $L \in \{0, 1, 2\}$. Furthermore, according to Observation~\ref{ob_SU3P_d3}, the legality of a degree-3 anchor vertex is maintained as long as its cycle neighbours provide two distinct labels and the external neighbour vertex has a label different from the anchor vertex.
\end{proof}

\begin{proof} [Proof of Proposition ~\ref{prop_SU3P_d3_var}]
    $(\Rightarrow)$ Suppose the gadget admits a legal strictly unfriendly 3-partition $\pi$. First, consider the literal synchronization. Vertices $v_1$ and $v_4$ are connected by a path of two degree-2 vertices $(v_2, v_3)$. By Observation~\ref{ob_SU3P_ppg}, this forces $\pi(v_1) = \pi(v_4)$. Similarly, the path $(v_6, v_7)$ forces $\pi(v_5) = \pi(v_8)$. Next, we establish the Boolean domain restriction. Vertex $v_4$ is a degree-3 vertex adjacent to the constant-2 anchor $u$. If $\pi(v_4) = 2$, then $v_4$ would violate Observation~\ref{ob_SU3P_zero}. Thus, $\pi(v_4) \in \{0, 1\}$. By the same logic applied to $v_5$ and $u'$, we have $\pi(v_5) \in \{0, 1\}$. Finally, since $v_4$ and $v_5$ are adjacent, they must be assigned different labels (otherwise violate Observation~\ref{ob_SU3P_zero}). Given that both are restricted to $\{0, 1\}$, it follows that $\{\pi(v_4), \pi(v_5)\} = \{0, 1\}$, ensuring that $x$ and $\neg x$ always receive opposite truth values.

    $(\Leftarrow)$ Conversely, if the truth assignment is consistent across literal occurrences (i.e., $\pi(v_1) = \pi(v_4) \in \{0, 1\}$ and $\pi(v_5) = \pi(v_8) = 1 - \pi(v_4)$), it is straightforward to extend the labeling to a legal partition.
\end{proof}

\begin{proof} [Proof of Proposition ~\ref{prop_SU3P_d3_3cls}]
    $(\Rightarrow)$ Suppose the gadget admits a legal strictly unfriendly 3-partition $\pi$. First, consider the adjacent pair $v_1$ and $v_2$. Similarly as $v_4,v_5$ in the variable gadget, we have $\{\pi(v_1),\pi(v_2)\}=\{0,1\}$. Let $\pi(v_1)=c$ and $\pi(v_2)=\lnot c$. Next, consider the internal vertices $v_3$ and $v_4$. Since all literal vertices $x, y, z$ are restricted to $\{0, 1\}$ by the variable gadgets, $v_3$ is adjacent to three vertices $\{v_1, x, y\}$ all within the set $\{0, 1\}$. To satisfy Observation~\ref{ob_SU3P_zero}, $v_3$ must be assigned the third label, $\pi(v_3) = 2$. By identical logic, $v_4$ must also be assigned $\pi(v_4) = 2$. We now analyze the satisfiability based on the value of $c$:
    \begin{itemize}
        \item $[c=1]:$ Then $\pi(v_2) = 0$. For $v_4$ (labeled 2) to be legal, it must have at least one of its neighbours $v_2, z, u''$ with label $1$. Since $\pi(v_2) = \pi(u'') = 0$, the vertex $z$ must be labeled $1$.
        \item $[c=0]:$ Then $\pi(v_1) = 0$. For $v_3$ (labeled 2) to be legal, its neighbours $\{v_1, x, y\}$ must include at least one neighbour with label $1$. Since $\pi(v_1) = 0$, at least one of $\{x, y\}$ must be labeled $1$.
    \end{itemize}
    In either case, the assignment $1$ is present among the literals, satisfying the clause.

    $(\Leftarrow)$ Conversely, if the clause $(x \lor y \lor z)$ is satisfied, a valid label assignment for the internal vertices $\{v_1, \dots, v_4\}$ in $\{0, 1, 2\}$ can be readily verified to exist by applying the local constraints from Observation~\ref{ob_SU3P_d3}, thereby completing the legal partition.
\end{proof}

\subsection{SU2P without degree restriction} \label{sec_app_SU2P}

\begin{theorem} \label{thm_SU2P_ud}
    SU2P is NP-hard on general graphs~\cite{kun2013anti}. Furthermore, assuming ETH, it cannot be solved in $2^{o(n)}$ time; specifically, $c_{(3,3)\text{-SAT}}^{2/7} \leq c_{\text{SU2P}}$.
\end{theorem}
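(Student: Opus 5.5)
The plan is a linear reduction from $(3,3)$-SAT that uses one vertex per literal, one vertex per clause and $O(1)$ constant vertices. By Lemma~\ref{lem_nm} this gives $N \le 2n + m + O(1) \le 3.5n + O(1)$ vertices, hence $c_{(3,3)\text{-SAT}}^n \le c_{\text{SU2P}}^{3.5n+O(1)}$, i.e.\ $c_{\text{SU2P}} \ge c_{(3,3)\text{-SAT}}^{2/7}$. ETH-hardness then follows from Lemma~\ref{lem_33}. Since degrees are now unbounded, high-degree \emph{hub} vertices can replace the cycle gadgets of Section~\ref{sec_su2p}. This is what removes the amortized per-signal cost.

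\emph{Literal and clause vertices.} For each variable $x$, create adjacent vertices $x$ and $\neg x$. Let $r \in \{1,2\}$ be the number of occurrences of a literal vertex. Join that vertex to two constant-1 hubs $c_1, c_1'$ and, when $r=1$, to one constant-0 hub $c_0$. A literal vertex $v$ then has $A=2$ constant-1 neighbours and $B = 2-r$ constant-0 neighbours, so $A = r+B$. I will argue that every clause vertex is labelled $0$ (next step). Under that, the four label choices for $(v,\text{partner})$ give the following: $v$ is legal when it and its partner differ ($r+B+1 > r+B$ in either case), and illegal when they agree ($r+B$ versus $r+B+1$). This forces $\pi(\neg x) = 1-\pi(x)$, and label $1$ reads as True.

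For each clause $C$ with $\ell \in \{2,3\}$ literals, create a vertex $C$ adjacent to its literal vertices and to $\ell-1$ constant-1 hubs. Let $t$ be the number of true literals. If $\pi(C)=1$, legality needs $\ell - t > (\ell-1)+t$, which is impossible; so $\pi(C)=0$ is forced, justifying the assumption above. With $\pi(C)=0$, legality needs $(\ell-1)+t > \ell-t$, i.e.\ $t\ge 1$. So $C$ is legal exactly when the clause is satisfied.

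\emph{Constant hubs (main obstacle).} I must show that $c_1,c_1'$ are forced to label $1$ and $c_0$ to label $0$, up to global swap, \emph{and} that they stay legal whatever their unboundedly many external neighbours do. For $c_1$ this looks fine. It sees every literal vertex, and exactly one of each pair is labelled $1$, so these contributions balance. Its clause neighbours are all $0$, and together with adjacency to $c_0$ this gives a strict surplus of $0$-neighbours. The delicate hub is $c_0$, whose literal neighbours (the $r=1$ occurrences) are not balanced. My first attempt is to attach a constant-size core, e.g.\ a small complete bipartite graph $K_{s,s}$, to $c_0$. The second attempt, if the core is not enough, is to rebalance by rerouting: the $r=1$ literal gets its constant-0 neighbour through a dedicated hub adjacent to both $x$ and $\neg x$, compensated by an extra constant-1 edge on the other side. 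This keeps $A=r+B$ while making every hub's literal neighbourhood contain both members of each pair, so the hub's count is balanced. The core should then make the hubs' labels rigid: a short case analysis shows that any labelling in which a hub takes the ``wrong'' label makes some core vertex illegal.

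Given this, the forward direction extends a satisfying assignment via the rules above. The backward direction reads the assignment off the literal vertices. Correctness follows from the local counts, and the vertex bound is the one stated at the start.
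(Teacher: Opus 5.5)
Your architecture and vertex count ($2n+m+O(1)\le 3.5n+O(1)$) match the paper's, but there are two genuine gaps. The first breaks the backward direction.

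\textbf{The clause vertex is not forced to $0$.} With $\ell-1$ constant-$1$ neighbours, $\pi(C)=1$ is legal iff $\ell-t>(\ell-1)+t$, i.e.\ iff $2t<1$, i.e.\ iff $t=0$. This is not impossible. Combined with your (correct) analysis of $\pi(C)=0$, every clause vertex has exactly one legal label under \emph{every} assignment: $0$ if the clause is satisfied, $1$ if it is not. All literal vertices of an unsatisfied clause carry label $0$, and a label-$1$ clause neighbour only gives them more slack. The only side effect is that each adjacent constant-$1$ hub loses two units of slack. That is not a per-clause check, because a hub's surplus grows with its number of satisfied clause neighbours and absorbs a few violations. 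So the gadget does not detect unsatisfied clauses. The assumption ``every clause vertex is labelled $0$'', on which your literal and hub analysis rests, is also unjustified. The paper instead joins the clause vertex to $\ell$ constant-$1$ vertices ($u,u',u''$ for a 3-clause, $u,u'$ for a 2-clause). Then $\pi(C)=1$ gives $d_1(C)=\ell+t\ge \ell-t=d_0(C)$, illegal for every $t$, while $\pi(C)=0$ is legal iff $t\ge 1$.

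\textbf{The constant hubs.} You leave these open, and neither suggestion can work.

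A constant-size core attached to $c_0$ cannot secure the legality of $c_0$ itself. $c_0$ sees the single-occurrence literal of every 3-occurrence variable. Take disjoint copies of $(x\lor a)\land(x\lor\neg a)\land(\neg x\lor b)\land(\neg b\lor a)$, where $\neg x,\neg a$ are false in every satisfying assignment. Then $c_0$ has $\Theta(n)$ label-$0$ neighbours against $O(1)$ label-$1$ ones, so completeness fails.

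Your rebalancing goal is impossible. If every hub sees both or neither of $x,\neg x$, then $x$ and $\neg x$ have the same hub neighbourhood, hence the same $A-B$. But $A-B=r$ must be $2$ for one literal and $1$ for the other. Per-variable dedicated hubs would cost $\Theta(n)$ extra vertices and ruin the $3.5n$ bound.

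The paper has no constant-$0$ hub. A literal vertex with $r$ occurrences gets exactly $r$ constant-$1$ neighbours ($A=r$, $B=0$ in your notation). Its $r$ clause neighbours supply the $0$ side, so your partner argument goes through unchanged. Rigidity comes from two vertices $u_4,u_5$ adjacent to each other and to the three hubs $u_1,u_2,u_3$, and to nothing else. Because $u_4,u_5$ have no outside neighbours, their legality alone forces $\pi(u_1)=\pi(u_2)=\pi(u_3)\neq\pi(u_4)=\pi(u_5)$; with only two hubs this core is not rigid. The hubs' own legality is then pure counting. Each hub sees at most one literal vertex per variable and many clause vertices, all labelled $0$, and the paper pads the formula so the latter dominate. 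In general, an unbalanced hub needs a surplus of order $n$, and only the uniformly-$0$ clause vertices can provide it. So only constant-$1$ hubs may be unbalanced, which is why your $c_0$ is the wrong design.
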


We introduce gadgets before we go into the full reduction. 

\medskip
\noindent \textbf{Constant Gadget.} The construction (shown in Figure~\ref{fig_SU2P_ud_cst}) needs five normal vertices $u_1,u_2,u_3,u_4,u_5$. Vertices $u_4,u_5$ only have neighbours in this gadget, while $u_1,u_2,u_3$ can have outside neighbours. As there is no degree restriction, $u_1,u_2,u_3$ can be reused multiples times as long as their labeling remains legal. Thus, the construction of constant vertices contains only five vertices in total.

We claim that, if the constant gadget is legally partitioned, $u_4,u_5$ have the same label, $u_1,u_2,u_3$ have the same label, and $u_1,u_4$ have different labels. Note that, $u_4,u_5$ do not have other neighbours; $u_1,u_2,u_3$ each has multiple neighbours: $n$ variable vertices (labeled $0$ or $1$, see proposition~\ref{prop_SU2P_ud_var}) and $m$ clause vertices (all labeled $0$, see proposition~\ref{prop_SU2P_ud_3cls}). Thus, as long as $n\leq m$ (which can always be achieved by introducing dummy variables and clauses), labeling for $u_1,u_2,u_3$ is legal.
\begin{figure}[H]
    \centering
    \resizebox{0.37\textwidth}{!}{
    \begin{tikzpicture}[
normal/.style = {draw, regular polygon,regular polygon sides=360, very thick},
fixed/.style={draw, regular polygon,regular polygon sides=4, very thick},
balance/.style = {draw, regular polygon,regular polygon sides=360, fill=black},
]
\node[fixed] (u1) [label=below:$0$] {$u_4$};
\node[fixed] (u3) [above=1cm of u1, label=above:$1$] {$u_1$};
\node[fixed] (u4) [right=1cm of u3, label=above right:$1$] {$u_2$};
\node[fixed] (u5) [right=1cm of u4, label=above:$1$] {$u_3$};
\node[fixed] (u2) [below=1cm of u5, label=below:$0$] {$u_5$};
\node[] (e3) [left=1cm of u3] {};
\node[] (e4) [above=1cm of u4] {};
\node[] (e5) [right=1cm of u5] {};

\draw (u1) -- (u2);
\draw (u1) -- (u3);
\draw (u1) -- (u4);
\draw (u1) -- (u5);
\draw (u2) -- (u3);
\draw (u2) -- (u4);
\draw (u2) -- (u5);
\draw (u3) -- (e3);
\draw (u4) -- (e4);
\draw (u5) -- (e5);

\end{tikzpicture}
}
    \caption{}
    \label{fig_SU2P_ud_cst}
\end{figure}

\begin{proposition} \label{prop_SU2P_ud_cst}
    $\pi(u_1)=\pi(u_2)=\pi(u_3)=1-\pi(u_4)=1-\pi(u_5)$ iff the constant gadget is legally partitioned.
\end{proposition}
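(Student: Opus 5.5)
The plan is a direct local case analysis on the two internal vertices $u_4,u_5$. Only their constraints actually pin down the labels, because they have no outside neighbours. Their neighbourhoods are fixed: $N(u_4)=\{u_5,u_1,u_2,u_3\}$ and $N(u_5)=\{u_4,u_1,u_2,u_3\}$, so both have degree $4$. For a degree-4 vertex $v$ with label $a$, the condition $d_a(v)<d_{1-a}(v)$ with $d_a(v)+d_{1-a}(v)=4$ is equivalent to $d_a(v)\le 1$. So I will use the analogue of Observation~\ref{ob_SU2P_d3}: a degree-4 vertex can have at most one neighbour sharing its label.

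For ($\Leftarrow$), i.e.\ the direction ``legal implies the stated labelling'', I split on whether $u_4$ and $u_5$ share a label.

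Case $\pi(u_4)=\pi(u_5)=a$. Then $u_4$ already has its one permitted same-label neighbour, namely $u_5$. Hence $u_1,u_2,u_3$ must all receive $1-a$, which is exactly the claimed pattern.

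Case $\pi(u_4)=0$ and $\pi(u_5)=1$ (the other case is symmetric). The same-label neighbours of $u_4$ are exactly the zeros among $u_1,u_2,u_3$, so at most one of them is labelled $0$. The same-label neighbours of $u_5$ are exactly the ones among $u_1,u_2,u_3$, so at most one of them is labelled $1$. This covers at most two of the three vertices, a contradiction. So this case cannot occur.

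For ($\Rightarrow$), I check legality under the stated labelling, fixing $\pi(u_1)=\pi(u_2)=\pi(u_3)=1$ and $\pi(u_4)=\pi(u_5)=0$ by symmetry.
\begin{itemize}
\item Each of $u_4,u_5$ has exactly one same-label neighbour (the other one) and three opposite neighbours, so $1<3$ holds.
\item Each $u_i$ with $i\le 3$ gets two opposite-label neighbours, $u_4$ and $u_5$, from inside the gadget. Its remaining neighbours are external.
\end{itemize}

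The main obstacle is the legality of $u_1,u_2,u_3$, since it depends on how they are attached to the rest of the reduction. The statement ``the gadget is legally partitioned'' must therefore be read relative to the intended attachments. I will use the counting stated just before the proposition. Each $u_i$ is adjacent to $n$ variable vertices, which can contribute at most $n$ same-label neighbours. It is also adjacent to $m$ clause vertices, all labelled $0$, which contribute $m$ opposite-label neighbours. So $d_1(u_i)\le n<m+2\le d_0(u_i)$ whenever $n\le m$, and this can be ensured by padding with dummy variables and clauses. I will note explicitly that this is the only place where the global structure enters. The internal argument above already forces the labelling regardless of the external neighbours.
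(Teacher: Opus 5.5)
Your proof is correct and follows essentially the paper's route: a local case analysis on the two internal degree-4 vertices $u_4,u_5$, using the fact that each can have at most one same-label neighbour. Your split on whether $\pi(u_4)=\pi(u_5)$ is a slightly tidier version of the paper's enumeration of which neighbour of $u_4$ shares its label, and your converse matches the paper's, which relies on the same $n\le m$ counting remark for $u_1,u_2,u_3$; you are simply more explicit that this is the only place where the external attachments enter.
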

\begin{proof}
    $(\Leftarrow)$ Without loss of generality, we assume $\pi(u_4)=0$. To make $u_4$ legal, there are two cases.
    \begin{itemize}
        \item \textbf{All $u_1,u_2,u_3,u_5$ are labeled $1$:} Then $u_5$ has three neighbours with label $1$ and thus is illegal.
        \item \textbf{One of $u_1,u_2,u_3,u_5$ is labeled $0$:} This case has two subcases.
        \begin{itemize}
            \item \textbf{$\pi(u_5)=0$:} Then both $u_4$ and $u_5$ have three neighbours $u_1,u_2,u_3$ with label $1$ and one neighbour with label $0$, and thus are legal.
            \item \textbf{One of $u_1,u_2,u_3$ is labeled $0$:} Then $u_5$ has two neighbours with label $0$ and two neighbour with label $1$s, and thus is illegal.
        \end{itemize}
    \end{itemize}
    The only legal case is when $\pi(u_1)=\pi(u_2)=\pi(u_3)=1-\pi(u_4)=1-\pi(u_5)$. As we claim $\pi(u_4)=0$, $u_1,u_2,u_3$ can be used as constant $1$ vertices.

    $(\Rightarrow)$ It is straightforward that when $\pi(u_1)=\pi(u_2)=\pi(u_3)=1-\pi(u_4)=1-\pi(u_5)$, the constant gadget is legally partitioned.
    
    This completes the proof.
\end{proof}

\medskip
\noindent \textbf{Variable Gadget.} For a variable $x$, the construction (shown in Figure~\ref{fig_SU2P_ud_var}) needs three constant vertices $u,u',u''$ with $\pi(u)=\pi(u')=\pi(u'')=1$, and two normal vertices $v_1,v_2$ representing $x,\lnot x$ respectively. Note that, if $x$ occurs once, then $u'$ is not there. Vertex $v_1$ is connected with the same number of constant $1$ vertices and clause vertices where $x$ appears. While $v_2$ is connected with the same number of constant $1$ vertices and clause vertices where $\lnot x$ appears. Vertices $u,u',u''$ are counted when calculating constant vertices. Thus, each variable needs $|\{v_1,v_2\}|=2$ vertices.

We claim that, if the variable gadget is legally partitioned, $\pi(v_1)=1-\pi(v_2)$, and thus $v_1,v_2$ can represent $x,\lnot x$ respectively.
\begin{figure}[H]
    \centering
    \resizebox{0.38\textwidth}{!}{
    \begin{tikzpicture}[
normal/.style = {draw, regular polygon,regular polygon sides=360, very thick},
fixed/.style={draw, regular polygon,regular polygon sides=4, very thick, minimum size=1.2cm},
balance/.style = {draw, regular polygon,regular polygon sides=360, fill=black},
]
\node[normal] (v1) [label=left:$x$] {$v_1$};
\node[] (u4) [below=1cm of v1] {to clause};
\node[] (u5) [right=1cm of u4] {to clause};
\node[] (u6) [right=1cm of u5] {to clause};
\node[normal] (v2) [above=1cm of u6, label=right:$\lnot x$] {$v_2$};
\node[fixed] (u1) [above=1cm of v1, label=above:$1$] {$u$};
\node[fixed] (u2) [right=1cm of u1, label=above:$1$] {$u'$};
\node[fixed] (u3) [above=1cm of v2, label=above:$1$] {$u''$};

\draw (v2) -- (v1);
\draw (u1) -- (v1);
\draw (u2) -- (v1);
\draw (u3) -- (v2);
\draw (u4) -- (v1);
\draw (u5) -- (v1);
\draw (u6) -- (v2);

\end{tikzpicture}
}
    \caption{}
    \label{fig_SU2P_ud_var}
\end{figure}

\begin{proposition} \label{prop_SU2P_ud_var}
    $\pi(v_1)=1-\pi(v_2)$ iff the variable gadget is legally partitioned.
\end{proposition}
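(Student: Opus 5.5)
We must show that $\pi(v_1)=1-\pi(v_2)$ holds exactly when the variable gadget is legally partitioned. The tool is a degree count at $v_1$ and $v_2$, using the constant-$1$ neighbours fixed by Proposition~\ref{prop_SU2P_ud_cst}. In the gadget, $v_1$ is adjacent to $v_2$, to $p$ constant-$1$ vertices and to $p$ clause vertices, where $p\in\{1,2\}$ is the number of occurrences of $x$. Likewise $v_2$ is adjacent to $v_1$, to $q$ constant-$1$ vertices and to $q$ clause vertices, where $q$ is the number of occurrences of $\lnot x$. So $d(v_1)=2p+1$ and $d(v_2)=2q+1$ are odd, and a vertex of degree $2p+1$ is legal iff at most $p$ of its neighbours share its label. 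I will take as given that every clause vertex has label $0$, as stated in the paragraph before the proposition, which anticipates Proposition~\ref{prop_SU2P_ud_3cls}.

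\textbf{Legality forces opposite labels.} Suppose $\pi(v_1)=\pi(v_2)=a$ and split on $a$.
\begin{itemize}
\item If $a=1$: $v_1$ has the $p$ constant vertices plus $v_2$ sharing its label, giving $p+1>p$. This violates legality.
\item If $a=0$: $v_1$ has the $p$ clause vertices (label $0$) plus $v_2$ sharing its label, again $p+1>p$. This also violates legality.
\end{itemize}
Hence $\pi(v_1)\neq\pi(v_2)$.

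\textbf{Opposite labels give a legal gadget.} If $\pi(v_1)=1-\pi(v_2)$, each of $v_1,v_2$ has exactly half of its constant and clause neighbours (namely $p$, resp. $q$) in its own class. Its partner is in the other class, so it has $p$ versus $p+1$ (resp. $q$ versus $q+1$) and is strictly legal. The constant vertices are legal by the global counting argument already given for the constant gadget, which uses $n\le m$.

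\textbf{Main obstacle.} The delicate point is the dependence on clause vertices being labeled $0$, which is only justified later. If that cannot be assumed outright, the argument should be phrased as a statement about partitions in which clause vertices carry label $0$, and the invariant discharged when the full reduction is assembled. A second point to check is that the rule "as many constant-$1$ neighbours as occurrences" is implemented exactly, since the $p$ versus $p+1$ balance relies on it.
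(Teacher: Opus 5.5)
Your proof is correct and uses the same degree-counting argument as the paper. With clause vertices labeled $0$ (deferred to Proposition~\ref{prop_SU2P_ud_3cls}, exactly as the paper does) and $p$ constant-$1$ neighbours, $v_1$ faces a $p$-versus-$p$ tie that only $v_2$ can break, forcing opposite labels; your version is somewhat more careful than the paper's, handling general $p,q\in\{1,2\}$ rather than the WLOG case and explicitly verifying the converse the paper calls straightforward.
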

\begin{proof}
    $(\Leftarrow)$ The clause gadget ensures that vertices connected to $v_1$ and $v_2$ are guaranteed to be labeled $0$. This will be proven when introducing the construction for clauses. For now, without loss of generality, we assume $x$ appears in two clauses and $\lnot x$ appears in one clause. Thus, $v_1$ two neighbours with label $1$ and two neighbour with label $0$s. To make $v_1$ legal, $v_2$ must be labeled different from $v_1$. Similarly, $v_1$ must have different label with $v_2$ to make $v_2$ legal.

    $(\Rightarrow)$ It is straightforward that when $\pi(v_1)=1-\pi(v_2)$, the variable gadget is legally partitioned.

    This completes the proof.
\end{proof}

\medskip
\noindent \textbf{Clause Gadget.} For a clause $(x\lor y\lor z)$, the construction (shown in Figure~\ref{fig_SU2P_ud_3cls}) needs three constant vertices $u,u',u''$ with $\pi(u)=\pi(u')=\pi(u'')=1$ and one normal vertex $v_1$, which is connected with $x,y,z$. Vertices $u,u',u''$ are counted when calculating constant vertices. Thus, each three-literal clause needs $|\{v_1\}|=1$ vertex.

We claim that the clause gadget is legally partitioned iff at least one of $x,y,z$ should be $1$ and thus the clause is satisfied.
\begin{figure}[H]
    \centering
    \resizebox{0.27\textwidth}{!}{
    \begin{tikzpicture}[
normal/.style = {draw, regular polygon,regular polygon sides=360, very thick},
fixed/.style={draw, regular polygon,regular polygon sides=4, very thick, minimum size=1.2cm},
balance/.style = {draw, regular polygon,regular polygon sides=360, fill=black},
]
\node[normal] (v1) [label=left:$0$] {$v_1$};
\node[fixed] (u2) [above=1cm of v1, label=above:$1$] {$u'$};
\node[fixed] (u1) [left=1cm of u2, label=above:$1$] {$u$};
\node[fixed] (u3) [right=1cm of u2, label=above:$1$] {$u''$};
\node[] (v3) [below=1cm of v1] {$y$};
\node[] (v2) [left=1cm of v3] {$x$};
\node[] (v4) [right=1cm of v3] {$z$};

\draw (v2) -- (v1);
\draw (v3) -- (v1);
\draw (v4) -- (v1);
\draw (u2) -- (v1);
\draw (u1) -- (v1);
\draw (u3) -- (v1);

\end{tikzpicture}
}
    \caption{}
    \label{fig_SU2P_ud_3cls}
\end{figure}

\begin{proposition} \label{prop_SU2P_ud_3cls}
    The clause $(x\lor y\lor z)$ is satisfied iff the clause gadget is legally partitioned.
\end{proposition}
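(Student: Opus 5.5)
The plan is a direct local count at the single internal vertex $v_1$, which is the only vertex whose legality the clause gadget governs. Its neighbourhood is exactly the three constant vertices $u,u',u''$ and the three literal vertices $x,y,z$. By Proposition~\ref{prop_SU2P_ud_cst}, $\pi(u)=\pi(u')=\pi(u'')=1$, so $d(v_1)=6$. Writing $t$ for the number of literals among $x,y,z$ labeled $1$, we get $d_1(v_1)=3+t$ and $d_0(v_1)=3-t$.

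First I would show that $\pi(v_1)=0$ is forced. If $\pi(v_1)=1$, then $d_1(v_1)=3+t\ge 3\ge 3-t=d_0(v_1)$, so the strict inequality $d_1(v_1)<d_0(v_1)$ fails and $v_1$ is illegal. This also establishes the claim, used earlier in the proof of Proposition~\ref{prop_SU2P_ud_var}, that every clause vertex adjacent to a literal vertex carries label $0$.

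With $\pi(v_1)=0$, legality of $v_1$ is equivalent to $3-t<3+t$, that is, $t\ge 1$. So $v_1$ is legal exactly when at least one of $x,y,z$ has label $1$, i.e.\ exactly when the clause is satisfied under the truth assignment read off from the literal labels.

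For the converse, assume the clause is satisfied. I would set $\pi(v_1)=0$, and the count above shows $v_1$ is legal. The remaining obligation, and the only place I expect any care, is that adding $v_1$ does not break the legality of its neighbours.
\begin{itemize}
\item \emph{Literal vertices.} Each literal vertex receives a clause neighbour labeled $0$, which is precisely the situation assumed in Proposition~\ref{prop_SU2P_ud_var}.
\item \emph{Constant vertices.} Each of $u,u',u''$ gains a neighbour labeled $0$, which only helps, because the constants carry label $1$. This matches the counting condition $n\le m$ stated for the constant gadget.
\end{itemize}
Hence the gadget is legally partitioned if and only if the clause is satisfied.
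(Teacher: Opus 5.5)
Your proposal is correct and follows the paper's proof. Both arguments rule out $\pi(v_1)=1$ because the three constant-$1$ neighbours already give $d_1(v_1)\ge 3\ge d_0(v_1)$, and then, with $\pi(v_1)=0$, require at least one literal labeled $1$; your parameter $t$ just packages the paper's case analysis on $x=y=z=0$, and your converse (checking that the new $0$-labeled clause vertex keeps the literal and constant vertices legal) is a more explicit version of the paper's one-line converse.
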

\begin{proof}

    $(\Leftarrow)v_1$ has degree six and three neighbours with label $1$s. Then if $\pi(v_1)=1$, whatever $x,y,z$ are, $v_1$ is illegal. Thus, $v_1$ must have label $0$. In the case where $x=y=z=0$, $v_1$ has three neighbours with label $1$ and three neighbours with label $0$s, and thus is illegal. Therefore, at least one of $x,y,z$ should be $1$ to make $v_1$ legal.

    $(\Rightarrow)$ If $(x\lor y\lor z)$ is satisfied, then the legal partition of the clause gadget can be obtained accordingly.
\end{proof}
The gadget for a 2-literal clause $(x\lor y)$ is derived from the 3-clause gadget by deleting vertices $u''$ and $z$. Both 2-clause and 3-clause gadgets need 1 vertex ($v_1$) only.

After all gadgets are introduced, we show the full reduction.
\begin{proof}[Proof of Theorem~\ref{thm_SU2P_ud}]
    Given an instance $\phi$ of $(3,3)$-SAT with $n$ variables and $m$ clauses, we construct the graph $G_\phi$ by assembling variable gadgets, clause gadgets, and the necessary constant neighbours as described earlier in the section. By Propositions~\ref{prop_SU2P_ud_cst}, ~\ref{prop_SU2P_ud_var}, and ~\ref{prop_SU2P_ud_3cls}, it follows that $G_\phi$ admits a strictly unfriendly 2-partition if and only if $\phi$ is satisfiable, giving SU2P is NP-hard.

    We now determine the total number of vertices $N = |V(G_\phi)|$ relative to the number of variables $n$ in $\phi$.
    \begin{itemize}
        \item Variable Gadgets: For each of the $n$ variables, the gadget introduces exactly $2$ vertices. Total: $2n$.
        \item Clause Gadgets: For each of the $m$ clauses, the gadget introduces exactly $1$ vertices. Total: $m\leq1.5n$ (by Lemma~\ref{lem_nm}).
        \item Constant Gadget: There are 5 vertices for the constant gadget.
    \end{itemize}
    Summing these, we obtain $N \leq 3.5n + 5$. The reduction is clearly linear in $n$. Assuming ETH, $(3,3)$-SAT cannot be solved in $2^{o(n)}$ time. Substituting $n \geq 2N/7$ (5 constant vertices are ignored, as they don't effect the limiting ratios), we conclude that SU2P cannot be solved in $2^{o(N)}$ time on general graphs, specifically:
    $$ c_{(3,3)\text{-SAT}}^n \leq c_{\text{SU2P}}^{3.5n} \implies c_{\text{SU2P}} \geq c_{(3,3)\text{-SAT}}^{2/7}.$$
    This completes the proof.
\end{proof}

\subsection{SUkP ($k\geq3$) without degree restriction} \label{sec_app_SUkP}

\begin{theorem} \label{thm_sukp}
    SUkP ($k\geq3$) is NP-hard on general graphs~\cite{kun2013anti}. Furthermore, assuming ETH, it cannot be solved in $2^{o(n)}$ time; specifically, $c_{(3,3)\text{-SAT}}^{2/7} \leq c_{\text{SUkP}}$ for all $k\geq3$.
\end{theorem}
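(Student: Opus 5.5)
The plan is to mirror the structure of the proof of Theorem~\ref{thm_SU2P_ud}. I will give a reduction from $(3,3)$-SAT in which each variable contributes exactly two vertices and each clause exactly one vertex, plus a constant gadget whose size depends only on $k$. This gives $N\le 2n+m+O_k(1)\le 3.5n+O_k(1)$ by Lemma~\ref{lem_nm}, and then $c_{(3,3)\text{-SAT}}^n\le c_{\text{SU}k\text{P}}^{3.5n}$ yields the bound $c_{(3,3)\text{-SAT}}^{2/7}$. NP-hardness is inherited from~\cite{kun2013anti} but also follows from the same reduction. Since the degree is unrestricted, every constant vertex can be reused arbitrarily often. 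All gadget constraints are therefore imposed by attaching each gadget vertex to a suitable multiset of constant vertices of prescribed labels.

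\emph{Constant gadget.} I will build a fixed graph of size $O(k)$, for instance a blow-up of $K_k$ with classes $U_0,\dots,U_{k-1}$, each of constant size. In it every legal SU$k$P must give the classes pairwise distinct labels, and by global label symmetry we may take $\pi(U_j)=j$. This yields constant-$j$ vertices for every $j$. The vertices of $U_j$ will receive many external neighbours, so their internal neighbourhood alone cannot keep them legal. I will handle this as in the SU2P case: the attachment scheme is chosen so that, for each constant vertex of label $j$, its external neighbours are balanced across labels other than $j$ and never carry label $j$. Any remaining imbalance is absorbed by adding dummy variables and clauses, which changes $n,m$ by only a constant factor or an additive constant.

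\emph{Clause gadget.} For each clause $C$ I use one vertex $c$, adjacent to its literal vertices. It is also adjacent to two constant-$j$ vertices for every $j\in\{3,\dots,k-1\}$ and to one constant-$0$ vertex, and to no constant-$2$ vertex. Literal vertices will be confined to $\{0,1\}$, so $d_2(c)=0$. Hence $c$ cannot take a label $a\neq 2$: we would need $d_a(c)<d_2(c)=0$, which is impossible. So $\pi(c)=2$, and $c$ is legal iff $d_j(c)\ge 1$ for all $j\ne 2$. This holds automatically for $j=0$ and $j\ge 3$, and for $j=1$ it holds exactly when some literal of $C$ is true.

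\emph{Variable gadget.} For each variable I take two adjacent vertices $v_1$ (for $x$) and $v_2$ (for $\neg x$), each joined to its clause vertices, all labelled $2$. Each $v_i$ also gets enough constant-$j$ neighbours for $j\ge 3$, and a calibrated number of constant-$0$ and constant-$1$ neighbours. The calibration should make three things true. First, any label $j\ge 2$ is illegal for $v_i$; in particular, label $2$ gives $d_2(v_i)\ge 1$, while some other class is kept at most as large. Second, label $a\in\{0,1\}$ is legal iff the neighbour $v_{3-i}$ has label $1-a$; the intended tie-breaking is the one used for $v_1,v_2$ in Proposition~\ref{prop_SU2P_ud_var}. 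Third, the counts of label-$2$ neighbours, coming from clause occurrences, do not break these inequalities. With the constants fixed this is a finite inequality check. The occurrence count ranges only over $\{1,2\}$, so I can pad with constant-$2$ neighbours so that the count is the same in all cases. Combining the three gadgets gives: $G_\phi$ admits an SU$k$P iff $\phi$ is satisfiable.

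The main obstacle is the constant gadget. Its vertices have unbounded degree, so I must show simultaneously that its internal structure forces the labels $\pi(U_j)=j$, and that the external attachments cannot make a constant vertex illegal or open up a spurious labelling. I would first try making each $U_j$ large but constant-sized, with all edges between distinct classes and none inside a class; then check which external-degree balance conditions are needed; and finally add dummy gadgets to meet them. This is the analogue of the $n\le m$ padding in the proof of Proposition~\ref{prop_SU2P_ud_cst}.
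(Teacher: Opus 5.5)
Your skeleton matches the paper's: two vertices per variable, one per clause, and a constant gadget whose size depends only on $k$, giving $N\le 2n+m+O_k(1)\le 3.5n+O_k(1)$. Your clause gadget and its analysis ($d_2(c)=0$ forces $\pi(c)=2$, after which legality means some literal has label $1$) is exactly Proposition~\ref{prop_SUkP_cls}.

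The gap is the step you yourself flag as the main obstacle. You leave it open, and the balancing remedy you sketch fails for the variable gadget you describe.

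\emph{Forcing.} The claim that a blow-up of $K_k$ forces pairwise distinct class labels holds only for the bare blow-up. Once every class carries unboundedly many external neighbours, the internal degree counts no longer determine legality. So nothing in the proposal pins down the constant labels.

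\emph{Balancing.} Suppose each literal vertex gets $t\ge 1$ constant-$0$ neighbours. In the intended labelling $n$ literal vertices carry label $0$, so by pigeonhole some constant-$0$ vertex $w$ has $\Omega(n)$ neighbours of label $0$. For $k\ge 4$, the only vertices with a label $\ell\ge 3$ lie in the $O(k)$-size constant gadget. Hence $d_\ell(w)=O(1)<d_0(w)$, and $w$ is illegal under every satisfying assignment. Dummy variables and clauses cannot supply labels $\ge 3$. Padding by a constant factor would in any case spoil the exact exponent $2/7$.

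The missing idea is that no balancing is needed.

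\emph{Constant gadget.} Use a plain $K_k$ on $u_1,\dots,u_k$ and give one vertex, say $u_2$, no external edges. It has degree $k-1$ and needs a neighbour of each of the other $k-1$ labels, so the clique is rainbow whatever hangs off the other vertices.

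\emph{Invariant.} Attach a constant vertex of label $j$ only to gadget vertices that never receive label $j$ in the intended labelling. Such a vertex has $d_j=0$, while the clique already supplies one neighbour of every other label. So it stays legal regardless of its external degree.

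\emph{Variable gadget.} This invariant dictates the design: $v_1,v_2$ get exactly one constant neighbour of each label $2,\dots,k-1$ and none of labels $0,1$.
\begin{itemize}
\item A label $j\ge 2$ on $v_i$ would give $d_j(v_i)\ge 1$, hence require $d_0(v_i),d_1(v_i)\ge 2$. This is impossible, since $v_i$ has at most three neighbours outside the constant gadget.
\item So literal vertices lie in $\{0,1\}$, and clause vertices are then forced to label $2$.
\item Finally, $v_i$ with label $a$ is legal exactly when $\pi(v_{3-i})=1-a$.
\end{itemize}
Your clause gadget already respects the invariant. So $N\le 2n+m+k\le 3.5n+k$ with no dummy gadgets at all.
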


We introduce gadgets before we go into the full reduction. 

\medskip
\noindent \textbf{Constant Gadget.} Such construction is shown in Figure~\ref{fig_SUkP_cst}. It is a $k$-complete graph consisting of $u_1,u_2,...,u_k$. All vertices except $u_2$ can have extra neighbours. In total, $k$ vertices are needed to construct such gadget.
\begin{figure}[H]
    \centering
    \resizebox{0.43\textwidth}{!}{
    \begin{tikzpicture}[
normal/.style = {draw, regular polygon,regular polygon sides=360, very thick},
fixed/.style={draw, regular polygon,regular polygon sides=4, very thick},
balance/.style = {draw, regular polygon,regular polygon sides=360, fill=black},
omit/.style={draw=white, regular polygon,regular polygon sides=4, very thick},
]
\node[omit] (u) [] {$......$};
\node[fixed] (uk) [above left=3cm of u, label=above:$k-1$] {$u_k$};
\node[fixed] (u3) [above right=3cm of u, label=above:$2$] {$u_3$};
\node[fixed] (u1) [above right=1cm of uk, label=above:$0$] {$u_1$};
\node[fixed] (u2) [above left=1cm of u3, label=above:$1$] {$u_2$};
\node[] (e1) [above left=1cm of u1] {};
\node[] (e3) [right=1cm of u3] {};
\node[] (ek) [left=1cm of uk] {};

\draw (u1) -- (u2);
\draw (u1) -- (u3);
\draw (u1) -- (u);
\draw (u1) -- (uk);
\draw (u2) -- (u3);
\draw (u2) -- (u);
\draw (u2) -- (uk);
\draw (u3) -- (u);
\draw (u3) -- (uk);
\draw (u) -- (uk);
\draw (u1) -- (e1);
\draw (u3) -- (e3);
\draw (uk) -- (ek);

\end{tikzpicture}
}
    \caption{Labels $\pi(v)$ are indicated above the vertices.}
    \label{fig_SUkP_cst}
\end{figure}

\begin{proposition} \label{prop_SUkP_cst}
    $\pi(u_1)\neq \pi(u_2)\neq...\neq \pi(u_k)$ iff this gadget is legally partitioned.
\end{proposition}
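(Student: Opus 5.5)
The plan is to exploit the one vertex of the clique gadget that has no external neighbours, namely $u_2$, whose degree is exactly $k-1$. By the Remark after the definition of SU$k$P, a vertex of degree $k-1$ sits exactly at the minimum degree needed for legality. Its constraint should therefore pin down the labels of all the other clique vertices. The converse direction is a direct check inside the clique.

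\emph{($\Leftarrow$, legality forces distinct labels).} Suppose the gadget is legally partitioned and let $a=\pi(u_2)$. Since $u_2$ is legal, $d_i(u_2)\ge d_a(u_2)+1$ for each of the $k-1$ labels $i\neq a$. Summing gives
$$k-1=d(u_2)\ \ge\ d_a(u_2)+(k-1)\bigl(d_a(u_2)+1\bigr).$$
This forces $d_a(u_2)=0$ and $d_i(u_2)=1$ for every $i\neq a$. The neighbourhood of $u_2$ is exactly $\{u_1,u_3,\dots,u_k\}$. Hence these $k-1$ vertices carry pairwise distinct labels, all different from $a$. Therefore $\pi$ is injective on $\{u_1,\dots,u_k\}$, i.e.\ $\pi(u_1),\dots,\pi(u_k)$ is a permutation of $L$. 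By global label symmetry we may then fix $\pi(u_i)=i-1$, as in Figure~\ref{fig_SUkP_cst}.

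\emph{($\Rightarrow$, distinct labels give legality).} If all $k$ labels are distinct, then every $u_i$ has, inside the clique, exactly one neighbour in each other class and none in its own. For $u_2$ this is its whole neighbourhood, so $u_2$ is legal. For the remaining $u_i$, the clique part contributes a perfectly balanced count, so adding the external neighbours preserves legality exactly when $u_i$'s own class does not receive at least as many of the external edges as some other class. Any imbalance caused by external edges must therefore favour the other classes.

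The main obstacle is precisely this interaction with external neighbours. The proposition is only meaningful as a statement about the clique itself, with the external neighbourhoods of $u_1,u_3,\dots,u_k$ handled separately. I would phrase the converse as: the labelling is legal on the gadget's internal edges. I would then defer the verification for the full graph to the reduction, as was done for the SU2P constant gadget. That means checking, once the variable and clause gadgets are fixed, that each constant vertex $u_i$ receives no more external neighbours of its own label than of any other label. If needed, one pads with dummy clauses or variables to balance the counts, exactly as the SU2P analysis used the condition $n\le m$.
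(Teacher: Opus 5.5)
Your proof is correct and is essentially the paper's argument, which consists of the one-line observation that $u_2$ has degree exactly $k-1$ and no external neighbours, so its neighbours $u_1,u_3,\dots,u_k$ must occupy the $k-1$ other classes one each; your summation inequality makes this rigorous, and your handling of the converse (legality on the clique's internal edges, with external edges checked in the reduction) is more careful than the paper's. One small slip: your first phrasing of the external-edge condition is off by strictness; the right condition is the one you state at the end, that $u_i$ has no more external neighbours of its own label than of any other label. In the paper's $k\ge 3$ reduction this holds without any padding, since variable vertices (labels $0,1$) attach only to $u_3,\dots,u_k$ and clause vertices (label $2$) attach only to $u_1,u_4,\dots,u_k$.
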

Follows as labels of $u_2$ and its neighbours need to be different. Without loss of generality, we assume $\pi(u_i)=i-1$, so that we have constant vertices with labels $0,2,...,k-1$.

\medskip
\noindent \textbf{Variable Gadget.} For every variable $x$, the construction needs $k-2$ constant vertices with labels $2,3,...,k-1$, and two normal vertices $v_1,v_2$ representing $x,\lnot x$. All anchor vertices have an edge to $v_1$ and $v_2$. The gadget is shown in Figure~\ref{fig_SUkP_var} assuming $x$ occurs twice in $\phi$. All the constant vertices are counted previously, so each variable needs two vertices in total ($v_1,v_2$).
\begin{figure}[H]
    \centering
    \resizebox{0.63\textwidth}{!}{
    \begin{tikzpicture}[
normal/.style = {draw, regular polygon,regular polygon sides=360, very thick},
fixed/.style={draw, regular polygon,regular polygon sides=4, very thick, minimum size=2.4cm},
balance/.style = {draw, regular polygon,regular polygon sides=360, fill=black},
omit/.style={draw=white, regular polygon,regular polygon sides=4, very thick},
]
\node[fixed] (u1) [label=above:$2$] {$u_3$};
\node[normal] (v1) [above left=3cm of u1, label=left:$x$] {$v_1$};
\node[fixed] (u2) [right=2cm of u1, label=above:$3$] {$u_4$};
\node[normal] (v2) [above right=3cm of u2, label=right:$\lnot x$] {$v_2$};
\node[fixed] (uk-2) [above right=3cm of v1, label=above:$k-1$] {$u_{k}$};
\node[omit] (u3) [right=2cm of uk-2] {$......$};
\node[] (e2) [above left=1cm of v1] {to clause};
\node[] (e3) [below left=1cm of v1] {to clause};
\node[] (e4) [above right=1cm of v2] {to clause};

\draw (u1) -- (v1);
\draw (u1) -- (v2);
\draw (v2) -- (v1);
\draw (e2) -- (v1);
\draw (e3) -- (v1);
\draw (e4) -- (v2);
\draw (u2) -- (v1);
\draw (u2) -- (v2);
\draw (u3) -- (v1);
\draw (u3) -- (v2);
\draw (uk-2) -- (v1);
\draw (uk-2) -- (v2);

\end{tikzpicture}
}
    \caption{Labels $\pi(v)$ are indicated above the vertices.}
    \label{fig_SUkP_var}
\end{figure}

We claim that, if this gadget is legally partitioned, neither of $v_1,v_2$ can have labels $2,3,...,k-1$, and $v_1,v_2$ must have different labels, so that we can use $v_1,v_2$ to denote $x,\lnot x$ respectively.

\begin{proposition} \label{prop_SUkP_var}
    In any legal partition $\pi$ of the variable gadget, $\pi(v_1), \pi(v_2) \in \{0,1\}$. Then, $\pi(v_1) = 1 - \pi(v_2)$ if and only if the variable gadget is legally partitioned.
\end{proposition}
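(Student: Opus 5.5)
The plan is a degree-counting argument at $v_1$ and $v_2$, driven by one fact: each of $v_1,v_2$ has exactly one constant neighbour of each label $2,3,\dots,k-1$ (namely $u_3,\dots,u_k$, fixed by Proposition~\ref{prop_SUkP_cst} under the normalisation $\pi(u_i)=i-1$). The counting tool is the general form of Observation~\ref{ob_SU3P_zero}. If a vertex $w$ has $d_{\pi(w)}(w)\ge 1$, then strict unfriendliness forces $d_i(w)\ge 2$ for each of the other $k-1$ labels $i$, so
\[
d(w)\ \ge\ 1+2(k-1)\ =\ 2k-1 .
\]
I will compare this with the actual degree of $v_1$ (symmetrically $v_2$). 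That degree is $(k-2)$ constant neighbours, plus $1$ for the edge $v_1v_2$, plus the number of clause occurrences of the literal. After preprocessing, every variable occurs at most three times and in both polarities, so each literal occurs at most twice. Hence $d(v_1)\le (k-2)+1+2=k+1$, and $k+1<2k-1$ for every $k\ge 3$. So any legal partition must satisfy $d_{\pi(v_1)}(v_1)=0$ and $d_{\pi(v_2)}(v_2)=0$.

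\emph{Domain restriction.} Suppose $\pi(v_1)=j$ with $2\le j\le k-1$. Then the constant neighbour $u_{j+1}$ carries label $j$, so $d_j(v_1)\ge 1$, contradicting $d_{\pi(v_1)}(v_1)=0$. The same argument applies to $v_2$, hence $\pi(v_1),\pi(v_2)\in\{0,1\}$.

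\emph{Polarity.} If $\pi(v_1)=\pi(v_2)$, then the edge $v_1v_2$ gives $d_{\pi(v_1)}(v_1)\ge 1$, again a contradiction. With both labels in $\{0,1\}$, this yields $\pi(v_1)=1-\pi(v_2)$, which proves the forward direction.

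\emph{Converse.} Take $\pi(v_1)=a\in\{0,1\}$ and $\pi(v_2)=1-a$. Then $v_1$ already sees each label in $L\setminus\{a\}$ at least once: labels $2,\dots,k-1$ through the constants and label $1-a$ through $v_2$. So $v_1$ is legal exactly when none of its clause neighbours carries label $a$, i.e.\ when $d_a(v_1)=0<1\le d_i(v_1)$ for all $i\ne a$; the same holds for $v_2$.

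The main obstacle is this converse step. Legality of $v_1,v_2$ depends on the labels of the clause vertices, which are only pinned down by the clause gadget for $k\ge 3$, in the same way as the SU2P analogue (Proposition~\ref{prop_SU2P_ud_var}) deferred to Proposition~\ref{prop_SU2P_ud_3cls}. I will state this dependence explicitly and discharge it there, by showing that clause vertices receive a label outside $\{0,1\}$ (or otherwise never equal to the literal they touch). A secondary check is that attaching $v_1,v_2$ does not make the reused constant vertices $u_3,\dots,u_k$ illegal. Since $v_1,v_2$ carry labels $0,1$, they never share a label with these constants, so the constants only gain opposite-label neighbours and stay legal provided the label counts stay balanced. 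I expect this balancing to be handled at the level of the full reduction, as was done for SU2P.
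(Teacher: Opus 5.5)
Your proposal is correct, and its forward direction takes a genuinely different route from the paper's. The paper excludes the labels $2,\dots,k-1$, and equal labels on $v_1,v_2$, by appealing to Proposition~\ref{prop_SUkP_cls}, i.e.\ to the fact that every clause neighbour carries label $2$. But the proof of Proposition~\ref{prop_SUkP_cls} itself assumes $\pi(x),\pi(y),\pi(z)\in\{0,1\}$ ``by the variable gadget'', so the two arguments lean on each other.

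You instead use a $k$-label version of Observation~\ref{ob_SU3P_zero}. Since $d(v_1)\le k+1<2k-1$, the vertex $v_1$ can have no neighbour of its own label. Together with the constants $u_3,\dots,u_k$ and the edge $v_1v_2$, this settles both the domain restriction and the polarity, with no reference to how the clause vertices are labelled.

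What your route buys is a non-circular order of proof. Proposition~\ref{prop_SUkP_cls} can afterwards legitimately use your conclusion, and the same degree bound applies to the clause vertex, whose degree is at most $(k-2)+3=k+1$. What the paper's route buys, at the price of that dependence, is insensitivity to the number of occurrences. Once clause neighbours are known to carry label $2$, the counting at $v_1$ works however often the literal occurs. Your bound, by contrast, genuinely needs the preprocessing guarantee that each literal occurs at most twice.

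Your converse coincides with the paper's ``straightforward'' step. You correctly make explicit that it requires no clause neighbour to share the label of $v_1$ or $v_2$, and the clause gadget supplies this via label $2$.

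One small correction to your secondary check: for $k\ge 3$ no balancing condition analogous to $n\le m$ in the SU2P construction is needed. Each constant $u_{j+1}$ has $d_j(u_{j+1})=0$ and already sees every other label once inside the clique. All its external neighbours carry labels different from $j$: variable vertices with labels in $\{0,1\}$ are attached to $u_3,\dots,u_k$, and clause vertices with label $2$ are attached to $u_1,u_4,\dots,u_k$. These neighbours only increase the counts $d_i$ with $i\neq j$, so legality of the constants is preserved automatically.
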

\begin{proof}
    $(\Leftarrow)$ All clause neighbours having label $2$ (see Proposition~\ref{prop_SUkP_cls}), and anchor vertices $u_3,u_4,\dots,u_{k}$ forces $\pi(v_1),\pi(v_2)\in\{0,1\}$, which guarantees the polarity of $x,\lnot x$. To satisfy $d_i(v_1)>d_{\pi(v_1)}(v_1)$ for all $2\leq i\leq k-1$, $\pi(v_2)\neq\pi(v_1)$. Thus $\{\pi(v_1),\pi(v_2)\}=\{0,1\}$.

    $(\Rightarrow)$ If $\pi(v_1)=1-\pi(v_2)$, it is straightforward that the gadget is legally partitioned.
\end{proof}

\medskip
\noindent \textbf{Clause Gadget.} For a clause $(x\lor y\lor z)$, the construction needs $k-2$ constant vertices $u_1,u_4,u_5,...,u_{k}$ with labels $0,3,4,...,k-1$, and one normal vertex $v_1$, which is connected with $x,y,z$. This gadget is shown in Figure~\ref{fig_SUkP_3cls}. All the constant vertices and variable vertices are counted previously, so the two-literal clause $(x\lor y\lor z)$ needs one vertex in total ($v_1$).
\begin{figure}[H]
    \centering
    \resizebox{0.33\textwidth}{!}{
    \begin{tikzpicture}[
normal/.style = {draw, regular polygon,regular polygon sides=360, very thick},
fixed/.style={draw, regular polygon,regular polygon sides=4, very thick, minimum size=2.4cm},
balance/.style = {draw, regular polygon,regular polygon sides=360, fill=black},
omit/.style={draw=white, regular polygon,regular polygon sides=4, very thick},
]
\node[normal] (v1) [label=above:$2$] {$v_1$};
\node[] (v2) [below left=2cm of v1] {$x$};
\node[] (v3) [below right=2cm of v1] {$z$};
\node[] (v4) [below=2cm of v1] {$y$};
\node[fixed] (u1) [left=2cm of v1, label=above:$0$] {$u_1$};
\node[fixed] (u2) [above left=2cm of v1, label=above:$3$] {$u_4$};
\node[omit] (u3) [above right=2cm of v1] {$......$};
\node[fixed] (uk-2) [right=2cm of v1, label=above:$k-1$] {$u_k$};

\draw (u1) -- (v1);
\draw (v2) -- (v1);
\draw (v3) -- (v1);
\draw (v4) -- (v1);
\draw (u2) -- (v1);
\draw (u3) -- (v1);
\draw (uk-2) -- (v1);

\end{tikzpicture}
}
    \caption{Labels $\pi(v)$ are indicated above the vertices.}
    \label{fig_SUkP_3cls}
\end{figure}

We claim that, this gadget is legally partitioned, iff at least one of $x,y,z$ is labeled $1$ and thus the clause is satisfied.

\begin{proposition} \label{prop_SUkP_cls}
    The clause $(x\lor y\lor z)$ is satisfied iff the clause gadget is legally partitioned.
\end{proposition}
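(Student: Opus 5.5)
The plan is to pin down the label of the clause vertex $v_1$ by a direct count of its neighbourhood, organised as a case analysis on $\pi(v_1)$. By Proposition~\ref{prop_SUkP_cst} (with the normalisation $\pi(u_i)=i-1$), the neighbours of $v_1$ are as follows: $u_1$ has label $0$; $u_4,\dots,u_k$ carry the labels $3,\dots,k-1$, one each; and the literal vertices $x,y,z$ have labels in $\{0,1\}$ by Proposition~\ref{prop_SUkP_var}. Crucially, $u_3$ (label $2$) is \emph{not} adjacent to $v_1$, so $d_2(v_1)=0$. Everything follows from this observation.

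For $(\Leftarrow)$, suppose the gadget is legally partitioned. I check each possible label of $v_1$.
\begin{itemize}
\item If $\pi(v_1)=1$, legality would require $d_1(v_1)<d_2(v_1)=0$, which is impossible.
\item If $\pi(v_1)=0$, then $d_0(v_1)\ge 1$ because of $u_1$, but $d_2(v_1)=0$, a contradiction.
\item If $\pi(v_1)=j\in\{3,\dots,k-1\}$, then $d_j(v_1)=1$ because of $u_{j+1}$, again exceeding $d_2(v_1)=0$.
\end{itemize}
Hence $\pi(v_1)=2$, so $d_2(v_1)=0$, and legality reduces to $d_i(v_1)\ge 1$ for every $i\neq 2$. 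Labels $0$ and $3,\dots,k-1$ are supplied by the constants. Label $1$ can only come from a literal, so at least one of $x,y,z$ has label $1$, i.e.\ the clause is satisfied.

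For $(\Rightarrow)$, given a satisfying assignment I set $\pi(v_1)=2$. By the count above, $v_1$ is then legal. The main obstacle is a global consistency check: the extra edges incident to the gadget must not break the legality of other vertices.
\begin{itemize}
\item For the literal vertices, having clause neighbours labelled $2$ is exactly the hypothesis used in Proposition~\ref{prop_SUkP_var}, so they stay legal.
\item For the constant vertices $u_i$ with $i\neq 3$ (and $u_2$ has no external edges), each starts in the clique with exactly one neighbour of every other label and none of its own. The added neighbours are clause vertices (label $2$) and, for $u_3,\dots,u_k$, literal vertices (labels $0,1$). No such neighbour has label $i-1$ for $u_i$ with $i\ge 4$ or $i=1$. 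For $u_3$ (label $2$), only literal neighbours with labels $0,1$ are added. Thus own-label counts remain $0$ while the other counts remain at least $1$, and strict unfriendliness is preserved.
\end{itemize}
I expect the only delicate point to be this bookkeeping. In particular, I must verify that no vertex labelled $i-1$ is ever attached to $u_i$. This is exactly why the clause gadget omits $u_2$ and $u_3$, and the variable gadget omits $u_1$ and $u_2$.
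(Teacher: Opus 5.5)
Your proposal is correct and follows the paper's argument. Since $u_3\notin N(v_1)$ and the literals lie in $\{0,1\}$, $d_2(v_1)=0$ forces $\pi(v_1)=2$, after which label $1$ can only come from a true literal; you simply write out the case analysis and the $(\Rightarrow)$ bookkeeping on the constant and literal vertices that the paper calls straightforward. One point worth making explicit is that the $\{0,1\}$ restriction you import from Proposition~\ref{prop_SUkP_var} holds by direct counting, which removes the circular appeal to the present proposition in the paper's proof of Proposition~\ref{prop_SUkP_var}: a literal vertex $w$ with $\pi(w)=j\ge 2$ has $d_j(w)\ge 1$ because of $u_{j+1}$, so it would need $d_0(w),d_1(w)\ge 2$, yet its only neighbours outside $\{u_3,\dots,u_k\}$ are its partner literal and at most two clause vertices.
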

\begin{proof}
    $(\Leftarrow)$ Consider $v_1$, as $\pi(x),\pi(y),\pi(z)$ are forced to be in $\{0,1\}$ by variable gadget and $d_i(v_1)=1$ for all $3\leq i\leq k-1$, we must have $\pi(v_1) = 2$ and $d_2(v_1)=0$. To satisfy $d_1(v_1)>d_2(v_1)$, at least one of $x, y, z$ is forced to have label 1.

    $(\Rightarrow)$ If the clause is satisfied, it comes straightforward the gadget is legally partitioned.
\end{proof}
The gadget for a 2-literal clause $(x\lor y)$ is derived from the 3-clause gadget by deleting vertex $z$. Both 2-clause and 3-clause gadgets need 1 vertex ($v_1$) only.

After all gadgets are introduced, we show the full reduction.
\begin{proof}[Proof of Theorem~\ref{thm_sukp}]
    Given an instance $\phi$ of $(3,3)$-SAT with $n$ variables and $m$ clauses, we construct the graph $G_\phi$ by assembling variable gadgets, clause gadgets, and the necessary constant neighbours as described earlier in the section. By Propositions~\ref{prop_SUkP_cst}, ~\ref{prop_SUkP_var}, and ~\ref{prop_SUkP_cls}, it follows that $G_\phi$ admits a strictly unfriendly k-partition if and only if $\phi$ is satisfiable, giving SUkP is NP-hard.

    We now determine the total number of vertices $N = |V(G_\phi)|$ relative to the number of variables $n$ in $\phi$.
    \begin{itemize}
        \item Variable Gadgets: For each of the $n$ variables, the gadget introduces exactly $2$ vertices. Total: $2n$.
        \item Clause Gadgets: For each of the $m$ clauses, the gadget introduces exactly $1$ vertex. Total: $m\leq1.5n$ (by Lemma~\ref{lem_nm}).
        \item Constant Gadget: There are $k$ vertices for the constant gadget.
    \end{itemize}
    Summing these, we obtain $N \leq 3.5n+k$. The reduction is clearly linear in $n$. Assuming ETH, $(3,3)$-SAT cannot be solved in $2^{o(n)}$ time. Substituting $n \geq 2N/7$ ($k$ constant vertices are ignored, as they don't effect the limiting ratios), we conclude that SU$k$P cannot be solved in $2^{o(N)}$ time on general graphs, specifically:
    $$ c_{(3,3)\text{-SAT}}^n \leq c_{\text{SUkP}}^{3.5n} \implies c_{\text{SUkP}} \geq c_{(3,3)\text{-SAT}}^{2/7}.$$
    This completes the proof.
\end{proof}

\end{document}